\long\def\sidebyside#1#2{%
 \hbox to\textwidth{\vtop{\hsize=.5\textwidth%

 \advance\hsize by -.5\columnsep
\parindent=0pt
\centering

 #1\vskip1sp}\hskip\columnsep\vtop{\hsize=.5\textwidth%
 \advance\hsize by -.5\columnsep
\parindent=0pt
\centering
#2

}\hfill}}
\newtheorem{Thm}{Theorem}
\newtheorem{ddefinition}[Thm]{Definition}
\newtheorem{ttheorem}[Thm]{Theorem}
\newcommand {\vi}{\mathbb{V}}
\newcommand {\ccdot}{\mathbf{\cdot }} 
\newcommand {\ub}{\mathbf{u}}
\newcommand {\vb}{\mathbf{v}}
\newcommand {\wb}{\mathbf{w}}
\newcommand {\vc}{\mathbb{V}_{\!c}}
\newcommand {\unpuvc}{ \lower.6ex \hbox {$1 + \uvc$} }
\newcommand {\op}{\mathbf{\oplus}} 
\newcommand {\om}{\mathbf{\ominus}} 
\newcommand {\od}{\mathbf{\otimes}}   
\newcommand {\uvc}{\displaystyle\frac{\lower.6ex \hbox {$\ub\ccdot\vb$}}{c^2}}
\newcommand {\gub}{\gamma_{\ub}^{\phantom{1}}}
\newcommand {\gvb}{\gamma_{\vb}^{\phantom{1}}}
\newcommand {\gupvb}{\gamma_{\ub\op\vb}^{\phantom{1}}}
\newcommand {\gubs}{\gamma_{\ub}^2}
\newcommand {\gvbs}{\gamma_{\vb}^2}
\newcommand {\Rtwo}{\Rb^2}
\newcommand {\Rctwou}{{\Rb}_{c=1}^2}
\newcommand {\Rstwou}{{\Rb}_{s=1}^2}
\newcommand {\Rt}{\Rb^3}
\newcommand {\Rb}{\mathbb{R}}
\newcommand {\Rct}{{\Rb}_{c}^{3}}
\newcommand {\Rctwo}{{\Rb}_{c}^{2}}
\newcommand {\Rcn}{{\Rb}_{c}^{n}}
\newcommand {\Rstwo}{{\Rb}_{s}^2}
\newcommand {\Rsn}{{\Rb}_{s}^{n}}
\newcommand {\Rn}{\Rb^n}
\newcommand {\bz}{\mathbf{0}}
\newcommand {\zerb}{\mathbf{0}}
\newcommand {\inn}{\hspace{-0.1cm}\in\hspace{-0.1cm}}
\newcommand {\gyr}{{\rm gyr}}
\newcommand {\gyruvb}{\gyr[\ub,\vb]}
\newcommand {\gyrvub}{\gyr[\vb,\ub]}
\newcommand {\VS}{\mathbb{V}_s}
\newcommand {\ab}{\mathbf{a}}
\newcommand {\bb}{\mathbf{b}}
\newcommand {\cb}{\mathbf{c}}
\newcommand {\db}{\mathbf{d}}
\newcommand {\Aut}{{\rm Aut}}
\newcommand {\gyrab}{\gyr[a,b]}
\newcommand {\sqp}{\boxplus}
\newcommand {\sqm}{\boxminus}
\newcommand {\ro}{r_{_1}}
\newcommand {\rt}{r_{_2}}
\newcommand {\pb}{\mathbf{p}}
\newcommand {\Pb}{\mathbf{P}}
\newcommand {\xb}{\mathbf{x}}
\newcommand {\half}{\textstyle\frac{1}{2}}
\newcommand {\LAB}{L_{^{AB}}^{\phantom{o}}}
\newcommand {\MAB}{M_{^{AB}}^{\phantom{o}}}
 \newcommand {\MAD}{M_{^{AD}}^{\phantom{o}}}
 \newcommand {\MBC}{M_{^{BC}}^{\phantom{o}}}
\newcommand {\MABDC}{M_{^{ABDC}}^{\phantom{o}}}
 \newcommand {\gmA}{\gamma_{_A}}
 \newcommand {\gmB}{\gamma_{_B}}
 \newcommand {\gmC}{\gamma_{_C}}
 \newcommand {\gmD}{\gamma_{_D}}
\begin{document}
%
\begin{center}
\Large{
Einstein's Special Relativity: \\ The Hyperbolic Geometric Viewpoint
     }
\end{center}
\begin{center}
Abraham A. Ungar\\
Department of Mathematics\\
North Dakota State University\\
Fargo, ND 58105, USA\\
Email: abraham.ungar@ndsu.edu\\
May 2, 2009 \\[12pt]
Conference on Mathematics, Physics and Philosophy \\
on the Interpretations of Relativity, II, \\
Budapest, 4-6 September, 2009 \\[12pt]
Published in:\\
PIRT Conference Proc., 4-6 Sept. 2009, Budapest, pages 1--35.
\end{center}

\begin{quotation}
{\bf ABSTRACT}\phantom{OO}
The analytic hyperbolic geometric viewpoint of Einstein's special theory of relativity
is presented. Owing to the introduction of vectors into hyperbolic geometry,
where they are called {\it gyrovectors}, the use of analytic hyperbolic geometry
extends Einstein's unfinished symphony significantly, elevating it to the status
of a mathematical theory that could be emulated to the benefit of the entire
mathematical and physical community.
The resulting theory involves a gyrovector space approach to hyperbolic geometry
and relativistic mechanics, and
could be studied with profit by anyone with a sufficient background
in the common vector space approach to Euclidean geometry and classical mechanics.
Einstein noted in his 1905 paper that founded the special theory of relativity
that his velocity addition law satisfies the law of velocity parallelogram
only to a first approximation.
Within our hyperbolic geometric viewpoint of special relativity it becomes clear
that Einstein's velocity addition law leads to a hyperbolic parallelogram addition law
of Einsteinian velocities, which is supported experimentally by the cosmological effect
known as stellar aberration and its relativistic interpretation.
The latter, in turn, is supported experimentally by the
``GP-B'' gyroscope experiment developed by NASA and Stanford University.
Furthermore, the hyperbolic viewpoint of special relativity
meshes extraordinarily well with the Minkowskian four-vector formalism of special relativity,
revealing that the seemingly notorious relativistic mass
meshes up with the four-vector formalism as well, owing to the
natural emergence of {\it dark matter}.
It is therefore hoped that both special relativity and its underlying
analytic hyperbolic geometry will become part of the  lore learned by all
undergraduate and graduate mathematics and physics students.
\end{quotation}

\noindent
Key words: Special Relativity, Hyperbolic Geometry, Einstein's Velocity Addition Law,
Stellar Aberration, Dark Matter,
Gyrogroups, Gyrovector Spaces.

\section{Introduction}
\label{secc1}

It is a pleasure for me to be given this opportunity to participate in this
Conference on Mathematics, Physics and Philosophy in the Interpretations
of relativity, II, in order to present
the novel hyperbolic geometric viewpoint of Einstein's Special Theory of Relativity.

The hyperbolic geometry of Bolyai and Lobachevsky underlies relativistic physics
just as Euclidean geometry underlies classical physics.
As such, it suggests
a rather radical break of the traditional study of
Einstein's special theory of relativity, required for the restoration of the
glory and harmony of special relativity in the Twenty-First Century.
The mathematical structure that Einstein's velocity addition law encodes
emerges in all its splendor, leading to a
nocommutative-nonassociative algebraic setting for the
Beltrami-Klein ball model of hyperbolic geometry.

Einstein's velocity addition law is non-commutative (and non-associative, as well).
Indeed, Einstein's expos\'e of velocity composition for two inertial
systems emphasizes the lack of symmetry in the formula for the direction
of the relative composite velocity vector
\cite[pp.~905\,--\,906]{einstein05}
\cite[p.~117]{walter99b}.
\'Emile Borel's attempt to ``repair'' the seemingly ``defective'' Einstein velocity
addition in the years following 1912 is described by Walter in
\cite[p.~117]{walter99b}:
``Borel could construct a tetrahedron in kinematic space, and determined thereby both
the direction and magnitude of relative [composite] velocity in a symmetric manner.''

The goal of this lecture is to demonstrate that Einstein's velocity addition law
possesses rich structure and, hence, should be placed centrally in special relativity theory
along with the algebraic structure that it encodes.
We will find that Einstein's velocity addition law encodes a
vector space-like structure, called an {\it Einstein gyrovector space}, which
forms the setting for the Beltrami-Klein ball model of hyperbolic geometry
just as vector spaces form the setting for the standard model of Euclidean geometry.

Accordingly, placing Einstein's velocity addition law centrally in the basis of
Einstein's special theory of relativity amounts to the study of the theory by means
of its underlying analytic hyperbolic geometry.
The resulting hyperbolic geometric viewpoint of Einstein's special theory of relativity
is rewarding. It suggests, for instance, the introduction of vectors into
hyperbolic geometry, where they are called {\it gyrovectors}, and the study
of Einsteinian velocities as gyrovectors that add according to the
gyroparallelogram addition law.

Indeed, in the years 1908\,--\,1914, the period which experienced a dramatic flowering of
creativity in the special theory of relativity, the Croatian physicist and mathematician
Vladimir Vari\v cak (1865\,--\,1942), professor and rector of Zagreb
University, showed that this theory has a natural interpretation in
hyperbolic geometry \cite{varicak10,barrett01}.
However, much to his chagrin, he had to admit in 1924 \cite[p.~80]{varicak24}
that the adaption of
vector algebra for use in hyperbolic geometry was just not feasible,
as Scott Walter notes in \cite[p.~121]{walter99b}.
Vladimir Vari\v cak's hyperbolic geometry program, cited by
Pauli \cite[p.~74]{pauli}, is described by Walter in
\cite[p.~112--115]{walter99b}.

We will find in this lecture that Borel's attempt to ``repair''
Einstein's velocity addition law, and Vari\v cak's concern about the lack of
vector algebra in hyperbolic geometry are both unjustified. Indeed,
we will find that Einstein's velocity addition law admits a gyrovector space structure,
in which Einsteinian velocities are gyrovectors that add commutatively
according to the gyroparallelogram addition law.
The prefix ``gyro'' that we use extensively to capture analogies stems from
{\it Thomas gyration}, which is the mathematical abstraction of the
special relativistic effect known as {\it Thomas precession}.
In fact, it is the mere introduction of Thomas gyration that turns
Euclidean geometry into hyperbolic geometry, and
some results of classical mechanics into corresponding results in relativistic mechanics.

\section{Einstein Velocity Addition}
\label{secc2}

Let $c$ be any positive constant, let $(\vi,+,\ccdot)$ be any real inner product space,
and let
\begin{equation} \label{eqcball}
\vc  = \{\vb\in\vi: \|\vb\| < c \}
\end{equation}
be the $c$-ball of all relativistically admissible velocities of material
particles. It is the open ball of radius $c$, centered at the
origin of the real inner product space $\vi$, consisting of all vectors $\vb$
in $\vi$ with magnitude $\|\vb\|$ smaller than $c$.

Einstein velocity addition in the $c$-ball of all relativistically admissible velocities
is given by the equation
\cite[Eq.~2.9.2]{urbantkebookeng},\cite[p.~55]{moller52},\cite{fock},
\cite{mybook01},
\begin{equation} \label{eq01}
{\ub}\op{\vb}=\frac{1}{\unpuvc}
\left\{ {\ub}+ \frac{1}{\gub}\vb+\frac{1}{c^{2}}\frac{\gamma _{{\ub}}}{%
1+\gamma _{{\ub}}}( {\ub}\ccdot{\vb}) {\ub} \right\}
\end{equation}
satisfying the {\it gamma identity}
\begin{equation} \label{eqgupv00}
\gupvb = \gub\gvb\left(1+\frac{\ub\ccdot\vb}{c^2}\right)
\end{equation}
for all $\ub,\vb\in\vc $,
where $\gub$ is the gamma factor
\begin{equation} \label{v72gs}
\gvb = \frac{1}{\sqrt{1-\displaystyle\frac{\|\vb\|^2}{c^2}}}
\end{equation}
in the $c$-ball $\vc $.
Here $\ub\ccdot\vb$ and $\|\vb\|$ represent the inner product and the norm
that the ball $\vc$ inherits from its space $\vi$.

A frequently used identity that follows from \eqref{v72gs} is
\begin{equation} \label{tksnw}
\frac{\vb^2}{c^2} = \frac{\gvbs-1}{\gvbs}
\end{equation}
where we use the notation $\vb^2=\vb\ccdot\vb=\|\vb\|^2$.

In physical applications, $\vi=\Rt$ is the Euclidean 3-space,
which is the space of all classical, Newtonian velocities, and
$\vc=\Rct\subset\Rt$ is the $c$-ball of $\Rt$ of all relativistically
admissible, Einsteinian velocities.
Furthermore, the constant $c$ represents in physical applications the
vacuum speed of light.

Einstein addition \eqref{eq01} of relativistically
admissible velocities was introduced by Einstein in his 1905 paper
\cite{einstein05} \cite[p.~141]{einsteinfive}
that founded the special theory of relativity,
where the magnitudes
of the two sides of Einstein addition \eqref{eq01} are presented.
One has to remember here that the Euclidean 3-vector algebra was not so
widely known in 1905 and, consequently, was not used by Einstein.
Einstein calculated in \cite{einstein05} the behavior
of the velocity components parallel and orthogonal to the relative
velocity between inertial systems, which is as close as one can get
without vectors to the vectorial version \eqref{eq01}.

We naturally use the abbreviation
$\ub\om\vb=\ub\op(-\vb)$ for Einstein subtraction, so that,
for instance, $\vb\om\vb = \zerb$,
$\om\vb = \zerb\om\vb=-\vb$ and, in particular,
\begin{equation} \label{eq01a}
\om(\ub\op\vb) = \om\ub\om\vb
\end{equation}
and
\begin{equation} \label{eq01b}
\om\ub\op(\ub\op\vb) = \vb
\end{equation}
for all $\ub,\vb$ in the ball,
in full analogy with vector addition and subtraction. Identity
\eqref{eq01a} is known as the {\it automorphic inverse property},
and Identity
\eqref{eq01b} is known as the {\it left cancellation law} of Einstein
addition \cite{mybook02,mybook03,mybook04}.
We may note that
Einstein addition does not obey the immediate right counterpart of the
left cancellation law \eqref{eq01b} since, in general,
\begin{equation} \label{eq01c}
(\ub\op\vb)\om\vb \ne \ub
\end{equation}
However, this seemingly lack of a right cancellation law will be repaired
in \eqref{eq01bb}, following the suggestive introduction of a second
gyrogroup binary operation in Def.~\ref{defdual} below, which
captures analogies.

In the Newtonian limit of large $c$, $c\rightarrow\infty$, the ball $\vc $
expands to the whole of its space $\vi$, as we see from \eqref{eqcball},
and Einstein addition $\op$ in $\vc $
reduces to the ordinary vector addition $+$ in $\vi$,
as we see from \eqref{eq01} and \eqref{v72gs}.

Einstein addition is noncommutative \cite{barrett07}.
While $\|\ub\op\vb\|=\|\vb\op\ub\|$, we have, in general,
\begin{equation} \label{eqyt01}
\ub\op\vb\ne\vb\op\ub
\end{equation}
$\ub,\vb\in\vc $. Moreover, Einstein addition is also nonassociative
since, in general,
\begin{equation} \label{eqyt02}
(\ub\op\vb)\op\wb\ne\ub\op(\vb\op\wb)
\end{equation}
$\ub,\vb,\wb\in\vc $.

It seems that following the breakdown of
commutativity and associativity in Einstein addition some mathematical
regularity has been lost in the transition from
Newton's velocity vector addition in $\vi$ to
Einstein's velocity addition \eqref{eq01} in $\vc $. This is, however, not the
case since, as we will see in Sec.~\ref{secc3},
Thomas gyration comes to the rescue
\cite{mybook01,mybook02,mybook03,mybook04,walterrev2002,rassiasrev2008}.
Indeed, we will find in Sec.~\ref{secc3} that the mere introduction of gyrations endows
the Einstein groupoid $(\vc,\op)$ with a grouplike rich structure \cite{grouplike}
that we call a {\it gyrogroup}.

When the nonzero vectors $\ub,\vb\in\vc\subset\vi$ are parallel
in $\vi$, $\ub \| \vb$, that is, $\ub=\lambda\vb$
for some $0\ne\lambda\in\Rb$, Einstein addition reduces to the Einstein
addition of parallel velocities \cite[p.~50]{whittaker49},
\begin{equation} \label{eq1pfck03}
\ub\op\vb = \frac{\ub+\vb}{1+\frac{1}{c^2}\|\ub\|\|\vb\|}, \qquad
\ub \| \vb
\end{equation}
which was confirmed experimentally by the Fizeau's 1851 experiment \cite{miller81}.
Owing to its simplicity, some books on special relativity present
Einstein velocity addition in its restricted form \eqref{eq1pfck03}
rather than its general form \eqref{eq01}.

The restricted Einstein addition \eqref{eq1pfck03}
is both commutative and associative.
Accordingly, the restricted Einstein addition is a group operation,
as Einstein noted in \cite{einstein05}; see \cite[p.~142]{einsteinfive}.
In contrast, Einstein made no remark about group properties of his addition
of velocities that need not be parallel. Indeed, the general Einstein
addition \eqref{eq01} is not a group operation but, rather, a gyrocommutative
gyrogroup operation, a structure that was discovered more than
80 years later, in 1988 \cite{parametrization}, and is presented in
Def.~\ref{defroupx} in Sec.~\ref{secc3}.

\section{Thomas Gyration and Einstein Gyrogroups}
\label{secc3}

For any $\ub,\vb\inn\vc $, let $\gyruvb : \vc \rightarrow\vc $ be the self-map of $\vc $
given in terms of Einstein addition $\op$ by the equation \cite{parametrization}
\begin{equation} \label{eq004}
\gyruvb \wb = \om(\ub\op\vb)\op\{\ub\op(\vb\op\wb)\}
\end{equation}
where $\om\vb=-\vb$, for all $\wb\inn\vc $.
The self-map $\gyruvb$ of $\vc $, which takes $\wb\inn\vc $ into
$\om(\ub\op\vb)\op\{\ub\op(\vb\op\wb)\inn\vc $,
is called the {\it Thomas gyration} generated by $\ub$ and $\vb$.
Thomas gyration is the mathematical abstraction of the relativistic effect
known as Thomas precession \cite[Chap.~1]{mybook01},\cite[Sec.~10.3]{mybook03},
and it has an interpretation in hyperbolic geometry
\cite{vermeer05}
as the negative hyperbolic triangle defect \cite[Theorem 8.55]{mybook03}.

In the Newtonian limit, $c\rightarrow\infty$, Einstein addition $\op$ in $\VS$
reduces to the common vector addition + in $\vi$, which is associative.
Accordingly, in this limit the gyration $\gyruvb$ in \eqref{eq004} reduces to the
identity map of $\vi$. Hence, as expected, Thomas gyrations
$\gyruvb$, $\ub,\vb\inn\vc $,
vanish (that is, they become {\it trivial}) in the Newtonian limit.

It is clear from the gyration equation \eqref{eq004} that gyrations
measure the extent to which Einstein addition is nonassociative, where associativity
corresponds to trivial gyrations.

The gyration equation \eqref{eq004} can be manipulated
(with the help of computer algebra) into the equation
\begin{equation} \label{hdge1ein}
\gyruvb\wb = \wb + \frac{A\ub+B\vb}{D}
\end{equation}
where
\begin{equation} \label{hdgej2ein}
\begin{split}
 A &=-\frac{1}{c^2}\frac{\gubs}{(\gub+1)} (\gvb-1) (\ub\ccdot\wb)
 +
 \frac{1}{c^2}\gub\gvb (\vb\ccdot\wb)
\\[8pt] & \phantom{=} ~+
 \frac{2}{c^4} \frac{\gubs\gvbs}{(\gub+1)(\gvb+1)} (\ub\ccdot\vb) (\vb\ccdot\wb)
\\[8pt]
B &=- \frac{1}{c^2}
\frac{\gvb}{\gvb+1}
\{\gub(\gvb+1)(\ub\ccdot\wb) + (\gub-1)\gvb(\vb\ccdot\wb) \}
\\[8pt]
D &= \gub\gvb(1+ \frac{\ub\ccdot\vb}{c^2}) +1 = \gamma_{\ub\op\vb}^{\phantom{O}} + 1 > 1
\end{split}
\end{equation}
for all $\ub,\vb,\wb\in\vc $.
Allowing $\wb\in\vi\supset\vc $ in \eqref{hdge1ein}\,--\,\eqref{hdgej2ein},
gyrations $\gyr[\ub,\vb]$
are expendable to linear maps of $\vi$ for all $\ub,\vb\in\vc $.

In each of the three special cases when
(i) $\ub=\zerb$, or
(ii) $\vb=\zerb$, or
(iii) $\ub$ and $\vb$ are parallel
in $\vi$, $\ub\|\vb$, we have
$A\ub+B\vb=\zerb$ so that $\gyr[\ub,\vb]$ is trivial,
\begin{equation} \label{sprdhein}
\begin{split}
\gyr[\zerb,\vb]\wb &= \wb  \\
\gyr[\ub,\zerb]\wb &= \wb \\
\gyr[\ub,\vb]\wb &= \wb , \hspace{1.2cm} \ub\|\vb
\end{split}
\end{equation}
for all $\ub,\vb\in\vc$, and all $\wb\in\vi$.

It follows from \eqref{hdge1ein} that
\begin{equation} \label{eq1ffmznein}
\gyr[\vb,\ub](\gyr[\ub,\vb]\wb) = \wb
\end{equation}
for all $\ub,\vb\in\vc $, $\wb\in\vi$, so that gyrations are invertible
linear maps of $\vi$,
the inverse of $\gyr[\ub,\vb]$ being $\gyr[\vb,\ub]$
for all $\ub,\vb\in\vc $.

Gyrations keep the inner product of
elements of the ball $\vc $ invariant, that is,
\begin{equation} \label{eq005}
\gyruvb\ab\ccdot\gyruvb\bb = \ab\ccdot\bb
\end{equation}
for all $\ab,\bb,\ub,\vb\inn\vc $. Hence, $\gyruvb$ is an
{\it isometry} of $\vc $,
keeping the norm of elements of the ball $\vc $ invariant,
\begin{equation} \label{eq005a}
\|\gyruvb \wb\| = \|\wb\|
\end{equation}
Accordingly, $\gyruvb$ represents a rotation of the ball $\vc $ about its origin for
any $\ub,\vb\inn\vc $.

The invertible self-map $\gyruvb$ of $\vc $ respects Einstein addition in $\vc $,
\begin{equation} \label{eq005b}
\gyruvb (\ab \op \bb) = \gyruvb\ab \op \gyruvb\bb
\end{equation}
for all $\ab,\bb,\ub,\vb\inn\vc $,
so that $\gyruvb$ is an automorphism of the Einstein groupoid $(\vc ,\op)$.
We recall that an automorphism of a groupoid $(\vc ,\op)$
is a bijective self-map of the groupoid $\vc $
that respects its binary operation, that is, it satisfies \eqref{eq005b}.
Under bijection composition the automorphisms
of a groupoid $(\vc ,\op)$
form a group known as the automorphism group, and denoted
$\Aut(\vc ,\op)$.
Being special automorphisms, Thomas gyrations
$\gyruvb \inn \Aut(\vc ,\op)$, $\ub,\vb\inn\vc $,
are also called {\it gyroautomorphisms}, $\gyr$ being the gyroautomorphism
generator called the {\it gyrator}.

The gyroautomorphisms $\gyruvb$ regulate Einstein addition in the ball $\vc $,
giving rise to the following nonassociative algebraic laws that ``repair''
the breakdown of commutativity and associativity in Einstein addition:
 \begin{alignat}{2}\label{laws00}
 \notag
  \ub\op\vb & \!=\! \gyruvb(\vb\op\ub) &&\hspace{0.8cm}\text{Gyrocommutativity}\\
 \notag
  \ub\op(\vb\op\wb)& \!=\! (\ub\op\vb)\op\gyruvb\wb&&\hspace{0.8cm}
  \text{Left Gyroassociativity} \\
 \notag
  (\ub\op\vb)\op\wb& \!=\! \ub\op(\vb \op\gyrvub\wb) &&\hspace{0.8cm}
  \text{Right Gyroassociativity} \\
 \end{alignat}
for all $\ub,\vb,\wb\inn\vc $.

Owing to the gyrocommutative law in \eqref{laws00},
Thomas gyration is recognized as the familiar Thomas precession.
The gyrocommutative law was already known to Silberstein in 1914
\cite{silberstein14} in the following sense.
The Thomas precession generated by $\ub,\vb\inn\Rct$
is the unique rotation
that takes $\vb\op\ub$ into $\ub\op\vb$ about an axis perpendicular to the
plane of $\ub$ and $\vb$ through an angle $< \pi$ in $\vi$,
thus giving rise to the gyrocommutative law.
Obviously, Silberstein did not use the terms
``Thomas precession'' and ``gyrocommutative law'' since
these terms have been coined
later, respectively, following Thomas' 1926 paper \cite{thomas26},
and in 1991 \cite{grouplike,gaxioms}.
We may remark that
Thomas precession has purely kinematical origin,
as emphasized in \cite{ungarthomas06}. Accordingly, the presence of
Thomas precession is not connected with the action of any force.

Contrasting the discovery before 1914 of what we presently call
the gyrocommutative law, the gyroassociative laws, left and right,
were discovered about 75 years later, in 1988 \cite{parametrization}.

A most important property of Thomas gyration is the so called
{\it loop property} (left and right),
\begin{equation} \label{loops00}
\begin{array}{cl}
{\gyr}[\ub\op\vb,\vb]=\gyruvb
& \hbox{ \ \ \ \ \ \ Left Loop Property} \\[3pt]
{\gyr}[\ub,\vb\op\ub]=\gyruvb
& \hbox{ \ \ \ \ \ \ Right Loop Property} \\
\end{array}
\end{equation}
for all $\ub,\vb\inn\vc $.
The left loop property will prove useful in \eqref{eqtghj04}
in solving a basic gyrogroup equation.

The grouplike groupoid $(\vc ,\op)$ that regulates Einstein addition,
$\op$, in the ball $\vc $ of the Euclidean 3-space $\vi$ is a
{\it gyrocommutative gyrogroup} called an {\it Einstein gyrogroup}.
Einstein gyrogroups and gyrovector spaces are studied in
\cite{mybook01,mybook02,mybook03,mybook04}.
Gyrogroups are not peculiar to Einstein addition \cite{mbtogyp08}.
Rather, they are abound in the theory of groups \cite{tuvalungar01,tuvalungar02,feder03},
loops \cite{issa99},
quasigroup \cite{issa2001,kuznetsov03},
and Lie groups \cite{kasparian04,kikkawa75,kikkawa99}.

Taking the key features of Einstein velocity addition law, and guided by analogies
with groups, we are led to the following formal definition of
abstract gyrogroups:

\begin{ddefinition}\label{defroupx}
{\bf (Gyrogroups).}
{\it
A groupoid is a non-empty set with a binary operation.
A groupoid $(G , \op )$
is a gyrogroup if its binary operation satisfies the following axioms.
In $G$ there is at least one element, $0$, called a left identity, satisfying

\noindent
(G1) \hspace{1.2cm} $0 \op a=a$

\noindent
for all $a \inn G$. There is an element $0 \inn G$ satisfying axiom $(G1)$ such
that for each $a\inn G$ there is an element $\om a\inn G$, called a
left inverse of $a$, satisfying

\noindent
(G2) \hspace{1.2cm} $\om a \op a=0\,.$

\noindent
Moreover, for any $a,b,c\inn G$ there exists a unique element $\gyr[a,b]c \inn G$
such that the binary operation obeys the left gyroassociative law

\noindent
(G3) \hspace{1.2cm} $a\op(b\op c)=(a\op b)\op\gyrab c\,.$

\noindent
The map $\gyr[a,b]:G\to G$ given by $c\mapsto \gyr[a,b]c$
is an automorphism of the groupoid $(G,\op)$, that is,

\noindent
(G4) \hspace{1.2cm} $\gyrab\inn\Aut (G,\op) \,,$

\noindent
and the automorphism $\gyr[a,b]$ of $G$ is called
the gyroautomorphism, or the gyration, of $G$ generated by $a,b \inn G$.
The operator $\gyr : G\times G\rightarrow\Aut (G,\op)$ is called the
gyrator of $G$.
Finally, the gyroautomorphism $\gyr[a,b]$ generated by any $a,b \inn G$
possesses the left loop property

\noindent
(G5) \hspace{1.2cm} $\gyrab=\gyr [a\op b,b] \,.$
}
\end{ddefinition}

The first pair of the gyrogroup axioms are like the group axioms.
The last pair present the gyrator axioms and the middle axiom links the two pairs.

As in group theory, we use the notation
$a \om b = a \op (\om b)$
in gyrogroup theory as well.

Some groups are commutative.  In full analogy, some gyrogroups are gyrocommutative.

\begin{ddefinition}\label{defgyrocomm}
{\bf (Gyrocommutative Gyrogroups).}
{\it
A gyrogroup $(G, \oplus )$ is gyrocommutative if
its binary operation obeys the gyrocommutative law

\noindent
(G6) \hspace{1.2cm} $a\oplus b=\gyrab(b\oplus a)$

\noindent
for all $a,b\inn G$.
}
\end{ddefinition}

We thus see that it is Thomas gyration that
regulates Einstein addition, endowing it with the
rich structure of a gyrocommutative gyrogroup.

The gyrogroup operation (or, addition) of any gyrogroup
has an associated dual operation, called the {\it gyrogroup cooperation}
(or, {\it coaddition}), the definition of which follows:

\begin{ddefinition}\label{defdual}
{\bf (The Gyrogroup Cooperation (Coaddition)).}
{\it
Let $(G,\op)$ be a gyrogroup with gyrogroup operation (or, addition) $\op$.
The gyrogroup cooperation (or, coaddition) $\sqp$ is a second
binary operation in $G$ given by the equation
\begin{equation} \label{eqdfhn01}
a\sqp b = a \op\gyr[a,\om b]b
\end{equation}
for all $a,b\in G$.
}
\end{ddefinition}

Replacing $b$ by $\om b$ in \eqref{eqdfhn01}
we have the {\it cosubtraction} identity
\begin{equation} \label{eqdfhn01b}
a\sqm b := a\sqp(\om b) = a \om\gyr[a,b]b
\end{equation}
for all $a,b\in G$.

To motivate the introduction of the gyrogroup cooperation, let us solve
the equation
\begin{equation} \label{eqtghj03}
x\op a = b
\end{equation}
for the unknown $x$ in a gyrogroup $(G,\op)$.

Assuming that a solution
$x$ exists, we have the following chain of equations
 \begin{equation} \label{eqtghj04}
 \begin{split}
 x &= x \op 0 \\
&= x \op( a \om  a )\\
&=( x \op  a )\op\gyr [ x , a ](\om  a )\\
&=( x \op  a )\om\gyr [ x , a ] a \\
&=( x \op  a )\om\gyr [ x \op  a , a ] a \\
&=b\om\gyr[b,a]a \\
&=b\sqm a
 \end{split}
 \end{equation}
where the gyrogroup cosubtraction, \eqref{eqdfhn01b},
which captures here an obvious analogy, comes into play.
Hence, if a solution $x$ to the gyrogroup equation \eqref{eqtghj03}
exists, it must be given uniquely by \eqref{eqtghj04}. One can finally show
that the latter is indeed a solution \cite[Sec.~2.4]{mybook03}.

The gyrogroup cooperation is introduced into gyrogroups in order
to  capture useful analogies between gyrogroups and groups, and
to uncover duality symmetries
with the gyrogroup operation.
Thus, for instance, the gyrogroup cooperation recovers the seemingly
missing right counterpart of the left cancellation law \eqref{eq01b},
giving rise to the following right cancellation law:
\begin{equation} \label{eq01bb}
(\vb\op\ub) \sqm \ub = \vb
\end{equation}
for all $\ub,\vb$ in the ball.
Furthermore, the right cancellation law \eqref{eq01bb} can be dualized,
giving rise to the dual right cancellation law
\begin{equation} \label{eq01bbb}
(\vb\sqp\ub) \om \ub = \vb
\end{equation}

As an example, and for later reference, we note that it follows from
\eqref{eq01bb} that
\begin{equation} \label{hufns}
\db = (\bb\sqp \cb)\om\ab  \hspace{0.6cm} \Longrightarrow \hspace{0.6cm}
\bb\sqp \cb = \db\sqp\ab 
\end{equation}
in any gyrocommutative gyrogroup.

A gyrogroup cooperation is commutative if and only if the gyrogroup is
gyrocommutative
\cite[Theorem 3.4]{mybook02}
\cite[Theorem 3.4]{mybook03}.
Hence, in particular, Einstein coaddition is commutative.
Indeed, Einstein coaddition, $\sqp$, in an Einstein gyrogroup $(\VS,\op)$
is given by the equation
\cite[Eq.~3.195]{mybook03}
\begin{equation} \label{eincoadd}
\ub \sqp \vb = 2 \od \frac{\gub\ub+\gvb\vb}{\gub+\gvb}
\end{equation}
which is commutative, as expected.
Hence, for instance, $\db\sqp\ab=\ab\sqp\db$ in \eqref{hufns}.
The symbol $\od$ in \eqref{eincoadd} represents {\it scalar multiplication} so that,
for instance, $2\od\vb=\vb\op\vb$, for all $\vb$ in a gyrogroup $(G,\op)$,
as explained in Sec.~\ref{secc4}.

\section{Einstein Gyrovector Spaces}\label{secc4}

Einstein addition in the ball admits scalar multiplication, giving rise to the
following definition \cite{gaxioms}.

\begin{ddefinition}\label{defgvspace}
An Einstein gyrovector space $(\VS,\op,\od)$
is an Einstein gyrogroup $(\VS,\op)$, $\VS\subset\vi$, with scalar multiplication
$\od$ given by the equation
 \begin{equation} \label{eqmlt03}
 \begin{split}
r\od\vb&= s \frac
 {\left(1+\displaystyle\frac{\|\vb\|}{s}\right)^r
- \left(1-\displaystyle\frac{\|\vb\|}{s}\right)^r}
 {\left(1+\displaystyle\frac{\|\vb\|}{s}\right)^r
+ \left(1-\displaystyle\frac{\|\vb\|}{s}\right)^r}
\frac{\vb}{\|\vb\|}\\[8pt]
&= s \tanh( r\,\tanh^{-1}\frac{\|\vb\|}{s})\frac{\vb}{\|\vb\|}
 \end{split}
 \end{equation}
where $r$ is any real number, $r\in\Rb$,
$\vb\in\VS$, $\vb\ne\bz$, and $r\od\bz=\bz$, and with
which we use the notation $\vb\od  r=r\od\vb$.
\end{ddefinition}

Einstein gyrovector spaces are studied in \cite[Sec.~6.18]{mybook03}
and \cite{mybook04}.
Einstein scalar multiplication does not distribute over Einstein addition,
but it possesses other properties of vector spaces. For any
positive integer $n$, and for all real numbers $\ro,\rt\in\Rb$, and
$\vb\in\VS$, we have
\begin{alignat}{2}\label{scalarprp}
\notag
 n\od\vb&=\vb\op\dots\op\vb &&\qquad\text{$n$ terms}\\[3pt]
\notag
 (\ro+\rt)\od\vb&=\ro\od\vb\op\rt\od\vb
 &&\qquad\text{Scalar Distributive Law}\\[3pt]
\notag
(\ro\rt)\od\vb&=\ro\od(\rt\od\vb)
 &&\qquad\text{Scalar Associative Law} \\[3pt]
 \notag
r\od(\ro\od\ab\op\rt\od\ab)&=r\od(\ro\od\ab)\op r\od(\rt\od\ab)
 &&\qquad\text{Monodistributive Law}
 \notag
\end{alignat}
in any Einstein gyrovector space $(\VS,\op,\od)$.

Any Einstein gyrovector space $(\VS,\op,\od)$ inherits an inner product
and a norm from its vector space $\vi$. These turn out to be invariant under gyrations,
that is,
\begin{equation} \label{eqwiuj01}
\begin{split}
\gyr[\ab,\bb]\ub \ccdot \gyr[\ab,\bb]\vb &= \ub \ccdot \vb \\[3pt]
\| \gyr[\ab,\bb]\vb \| &= \| \vb \|
\end{split}
\end{equation}
for all $\ab,\bb,\ub,\vb\inn\VS$.

Unlike vector spaces, Einstein gyrovector spaces $(\VS,\op,\od)$ do not
possess the distributive law since, in general,
\begin{equation} \label{eqwdkj01}
r\od(\ub\op\vb) \ne r\od\ub\op r\od\vb
\end{equation}
for $r\inn\Rb$ and $\ub,\vb\inn\VS$.
One might suppose that there is a price to pay in mathematical regularity
when replacing ordinary vector addition with Einstein addition,
but  this is not the case as demonstrated in \cite{mybook01,mybook02,mybook03},
and as noted by S.~Walter in \cite{walterrev2002}.

In full analogy with the common Euclidean distance function,
Einstein addition gives rise to the
{\it gyrodistance} function
\begin{equation} \label{tkcflsen}
d_\op(\ab,\bb) = \|\om\ab\op\bb\|
\end{equation}
that obeys the gyrotriangle inequality
\cite[Theorem 6.9]{mybook03}
\begin{equation} \label{rif01}
d_\op(\ab,\bb) \le d_\op(\ab,\pb) \op d_\op(\pb,\bb)
\end{equation}
for any $\ab,\bb,\pb\in\VS$ in an Einstein gyrovector space $(\VS,\op,\od)$.
The gyrodistance function is invariant under the group of motions of any
Einstein gyrovector space, that is, under {\it left gyrotranslations} and
rotations of the space \cite[Sec.~4]{mybook03}.
The gyrotriangle inequality \eqref{rif01} reduces to a corresponding
gyrotriangle equality,
\begin{equation} \label{rif01s}
d_\op(\ab,\bb) = d_\op(\ab,\pb) \op d_\op(\pb,\bb)
\end{equation}
if and only if point $\pb$ lies between points $\ab$ and $\bb$, that is,
point $\pb$ lies on the gyrosegment $\ab\bb$, as shown in
Fig.~\ref{fig163b2enm} for points $A$, $B$, and $P$.
Accordingly, the gyrodistance function is gyroadditive on gyrolines, as
formulated in \eqref{rif01s}, and
illustrated graphiclly in Fig.~\ref{fig163b2enm}.

Furthermore, the Einstein gyrodistance function \eqref{tkcflsen} in any
$n$-dimensional Einstein gyrovector space $(\Rcn,\op,\od)$, $\Rcn\subset\Rn$
being the $n$-dimensional $c$-ball,
possesses a familiar Riemannian line element. It
gives rise to the Riemannian line element $ds_e^2$ of the Einstein gyrovector space
with its {\it gyrometric} \eqref{tkcflsen},
\begin{equation} \label{eqdelta01fen}
\begin{split}
ds_e^2 &=  \|(\xb+d\xb) \om \xb\|^2 \\[4pt]
&= \frac{c^2} {c^2-\xb^2}d\xb^2 +
  \frac{c^2} {(c^2-\xb^2)^2}(\xb\ccdot d\xb)^2
\end{split}
\end{equation}
where $d\xb^2 = d\xb\ccdot d\xb$,
as shown in \cite[Theorem 7.6]{mybook03}.

Remarkably, the Riemannian line element $ds_e^2$ in \eqref{eqdelta01fen} turns out
to be the well-known Riemannian line element that the
Italian mathematician Eugenio Beltrami introduced in 1868
in order to study hyperbolic geometry by a Euclidean disc model,
now known as the Beltrami-Klein disc
\cite[p.~220]{mccleary94}\cite{barrett00}.
An English translation of his historically significant 1868 essay
on the interpretation of non-Euclidean geometry is found in
\cite{stillwell96}. The significance of Beltrami's 1868 essay
rests on the generally known fact that
it was the first to offer a concrete interpretation of hyperbolic geometry
by interpreting ``straight lines'' as geodesics on a surface of a constant
negative curvature.
Beltrami, thus, constructed a Euclidean disc model of the
hyperbolic plane \cite{mccleary94} \cite{stillwell96}, which now bears his name
along with the name of Klein.

We have thus found that the Beltrami-Klein
ball model of hyperbolic geometry is regulated algebraically by Einstein gyrovector spaces
with their gyrodistance function \eqref{tkcflsen}
and Riemannian line element \eqref{eqdelta01fen},
just as the standard model of Euclidean geometry is regulated algebraically by vector spaces
with their Euclidean distance function
and Riemannian line element $ds^2 = d\xb^2$.

\begin{figure}[t]  
 \sidebyside {       
 \psfrag{pa}  {$A,\hspace{0.1cm}t=0$}
 \psfrag{pb}  {$B,\hspace{0.1cm}t=1$}
\psfrag{formula01}[]{${\rm The~Einstein~Gyroline} L_{^{AB}}$}
\psfrag{formula02}[]{${\rm through~the~points}~ A~{\rm and}~B$}
\psfrag{formula03}[]{$\boxed{A\op (\om  A\op  B)\od  t}$}
\psfrag{formula04}[]{$ -\infty < t < \infty $}
 \includegraphics[width=0.45\textwidth]{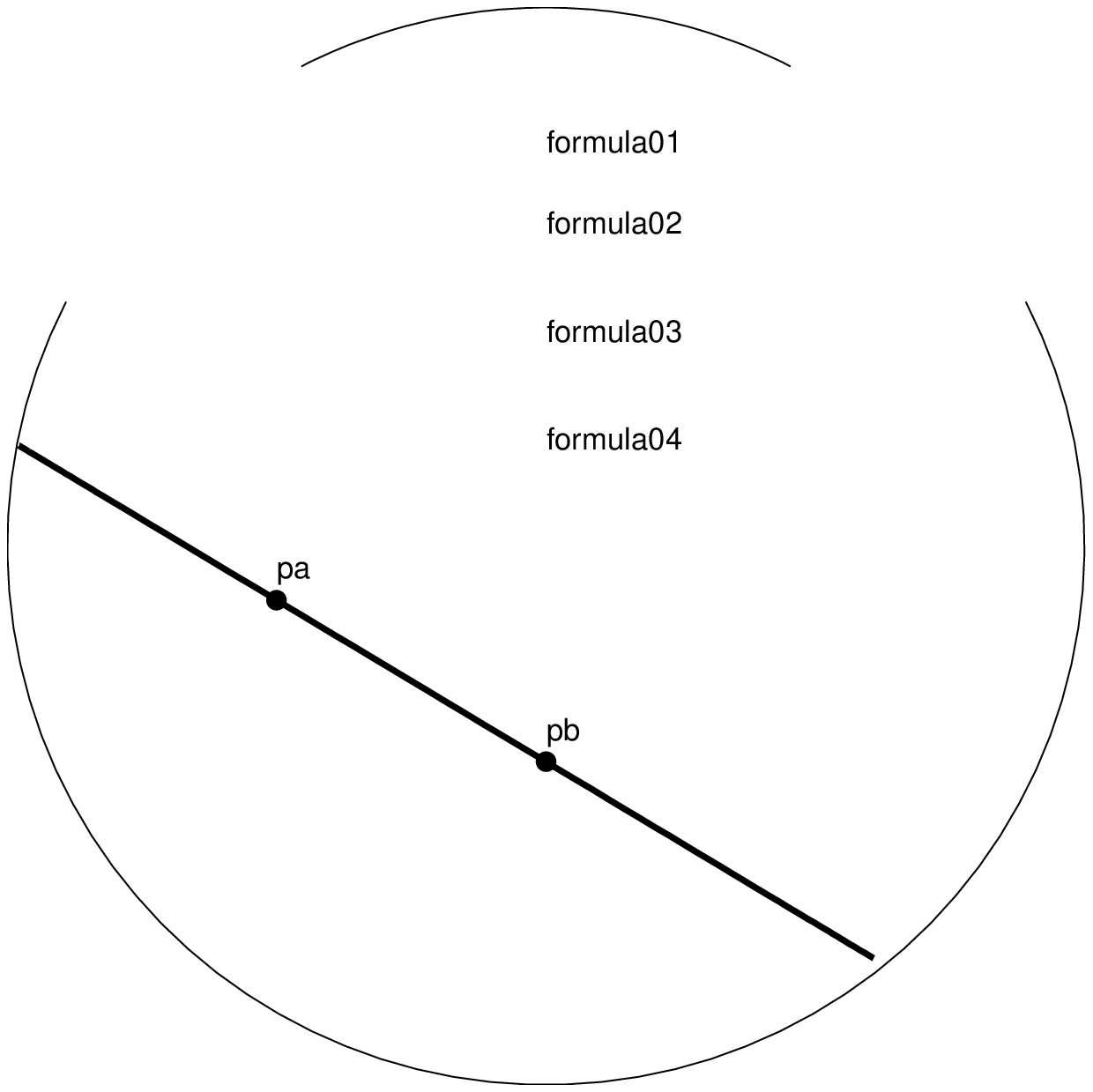}
\caption[The Einstein Gyroline]{
The unique gyroline $L_{^{AB}}$ in an Einstain gyrovector space $(\Rsn,\op ,\od )$
through two given points $A$ and $B$.
The case of the Einstain gyrovector plane, when $\Rsn=\Rstwou$ is the
real open unit disc, is shown.
\label{fig174b2enm}}}
  {
\psfrag{a}[]{$A$}
\psfrag{b}[]{$\!\! B$}
\psfrag{mab}{$M_{^{AB}}^{\phantom{O}} \!=\! \half\od (A\sqp \! B)$}
\psfrag{pab}{$P$}
\psfrag{formula00}[]{$d_{\om }(A,P) \op  d_{\om }(P,B)=d_{\om }(A,B)$}
\psfrag{formula01}[]{$\boxed{A\op (\om  A\op  B)\od  t}$}
\psfrag{formula02}[]{$0\le t \le1$}

 \includegraphics[width=0.45\textwidth]{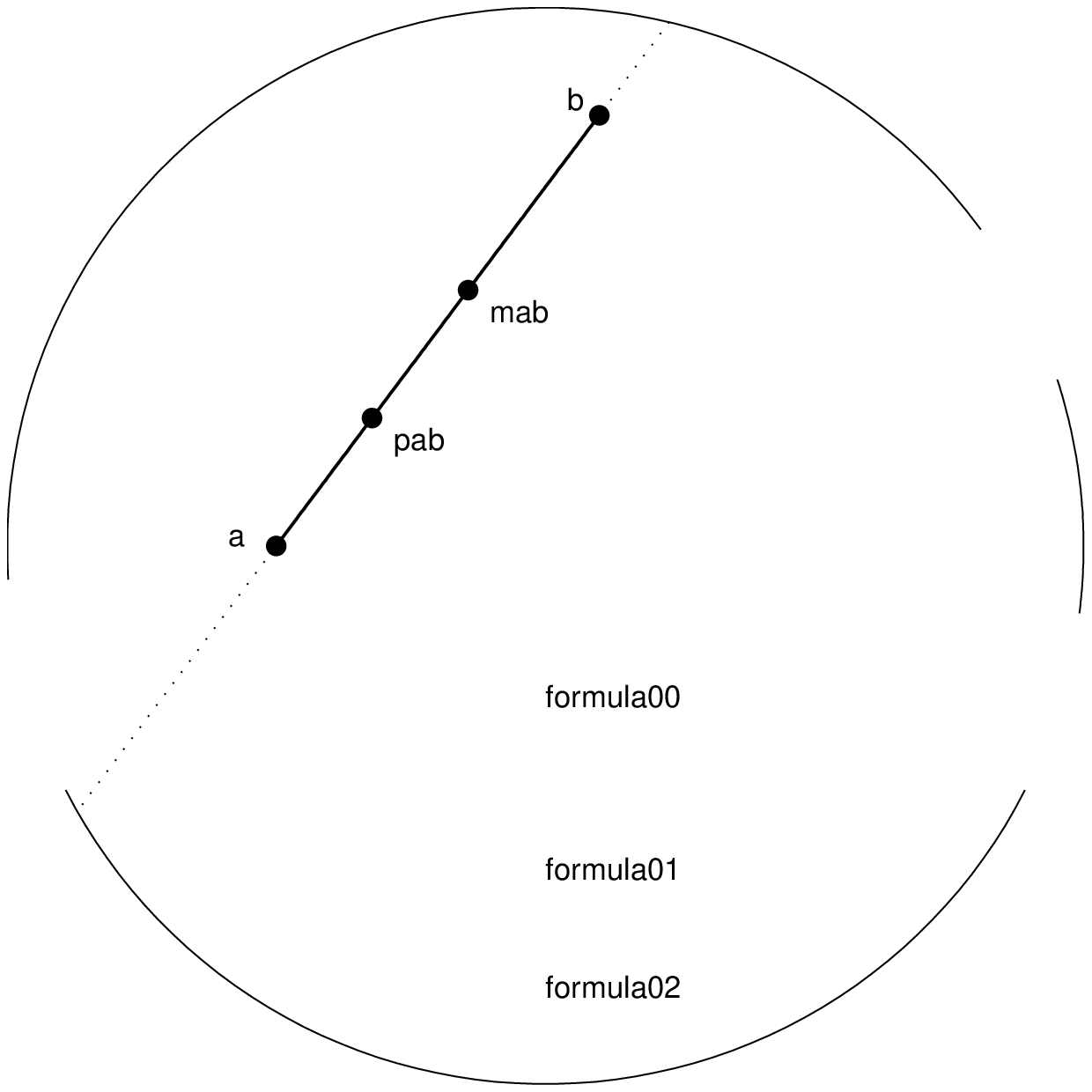}
\caption{
The gyrosegment $AB$ that links
the points $A$ and $B$ in $(\Rsn,\op ,\od )$, with one of its generic points
$P$ and its gyromidpoint $M_{^{AB}}$. The point $P$ lies between $A$ and $B$
and, hence, obeys the gyrotriangle equality, \eqref{rif01s}.
\label{fig163b2enm}} }
\end{figure}


In full analogy with Euclidean geometry,
the unique Einstein gyroline $\LAB$, Fig.~\ref{fig174b2enm},
that passes through two given
points $A$ and $B$ in an Einstein gyrovector space $\VS=(\VS,\op ,\od)$
is given by the equation
 \begin{equation} \label{eqrfgh01en}
\LAB = A\op (\om A \op  B)\od  t
 \end{equation}

Einstein gyrolines are chords of the ball, which turn out to be the
familiar geodesics of the Beltrami-Klein ball model of hyperbolic geometry
\cite{mccleary94}.
Accordingly, Einstein gyrosegments are Euclidean segments, as shown in
Fig.~\ref{fig163b2enm}.

The gyromidpoint $\MAB$ of gyrosegment $AB$ is the
unique point of the gyrosegment that satisfies the equation
$d_\op(\MAB,A) = d_\op(\MAB,B)$. It is given by each of the following equations
\cite[Theorem 3.33]{mybook04}, Fig.~\ref{fig163b2enm},
\begin{equation} \label{eqgyromid}
\MAB = A\op (\om A \op  B)\od \half
     = \frac{\gamma_{_A}A+\gamma_{_B}B}{\gamma_{_A}+\gamma_{_B}} =
\half \od (A\sqp B)
\end{equation}
in full analogy with Euclidean midpoints.

In Euclidean geometry a parallelogram is a quadrilateral the two diagonals
of which intersect at their midpoints.
In full analogy, in hyperbolic geometry a gyroparallelogram is a gyroquadrilateral
the two gyrodiagonals of which intersect at their gyromidpoints.
Accordingly, if $A$, $B$ and $C$ are any three non-gyrocollinear points
(that is, they do not lie on a gyroline) in an Einstein gyrovector space,
and if a fourth point $D$ is given by the {\it gyroparallelogram condition}
\begin{equation} \label{hkmdnf1}
D = (B\sqp C)\om A
\end{equation}
then the gyroquadrilateral $ABDC$ is a gyroparallelogram, shown in
Fig.~\ref{fig190k1m}.

  
\begin{figure}[t]  
 \centering         
\psfrag{----chets}[]{\lower-1.0ex \hbox {\footnotesize{$\blacktriangleright$}}}
\psfrag{mab}[]{$\hspace{0.66cm}\MABDC$}
\psfrag{O}[]{$\phantom{O}$}
\psfrag{a}[]{$A$}
\psfrag{b}[]{$B$}
\psfrag{c}[]{$C$}
\psfrag{d}[]{$D$}
\psfrag{formula00}{${\rm The~Gyroparallelogram}$} 
\psfrag{formula00a}{${\rm Condition:}~~~~D = (B\sqp C)\om A$} 
\psfrag{formula01}{$\MAD = \frac{\gamma_{_A}A+\gamma_{_D}D}{\gamma_{_A}+\gamma_{_D}}
 = \half(A\sqp D)$}
\psfrag{formula02}{$\MBC = \frac{\gamma_{_B}B+\gamma_{_C}C}{\gamma_{_B}+\gamma_{_C}}
 = \half(A\sqp C)$}
\psfrag{formula03}{$\MABDC = \frac{\gmA A +\gmB B +\gmC C +\gmD D}{\gmA+\gmB+\gmC+\gmD}$}
 \psfrag{formula03a}{$\MABDC=\MAD=\MBC$}
 \psfrag{formula04}{$\om C \op D =\gyr[ C ,\om B ]\gyr[ B ,\om A ](\om A \op B )$}
\psfrag{formula05}{$\om B \op D =\gyr[ B ,\om C ]\gyr[ C ,\om A ](\om A \op C )$}
\psfrag{fig190j}[]{$\boxed{(\om A \op B )\sqp(\om A \op C )=\om A \op D }$}
\psfrag{formula06}[]{$\boxed{\ub\sqp\vb=\wb}$}
\psfrag{text1}[]{$\hspace{0.20cm}\ub=\om A\op B$}
\psfrag{text2}[]{$\vb=\om A\op C$}
\psfrag{text3}[]{$\wb=\om A\op D$}
 \includegraphics[width=9cm]{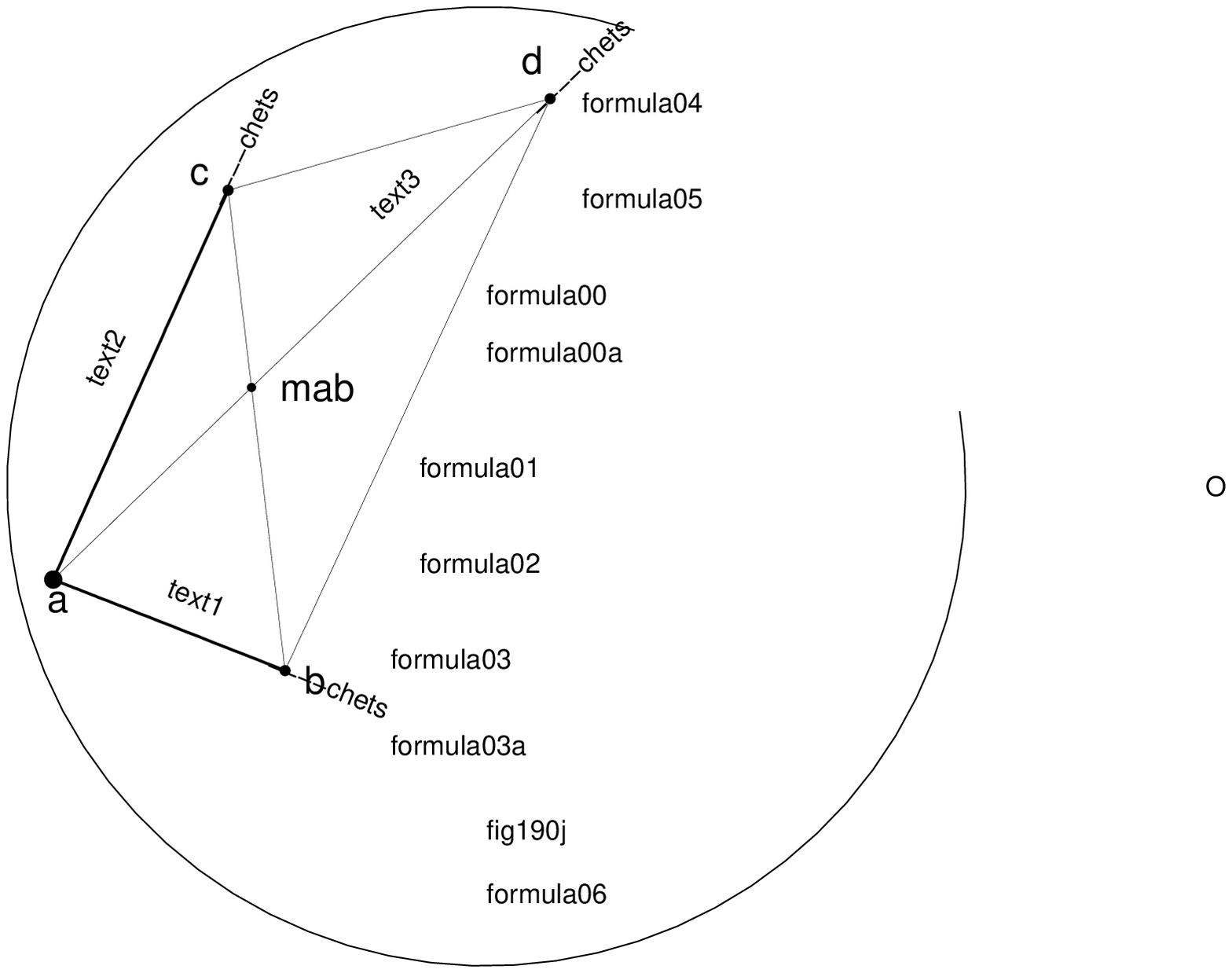}
\caption{
The Einstein gyroparallelogram and its addition law. 
The gyrodiagonals $AD$ and $BC$ of gyroparallelogram $ABDC$
intersect each other at their gyromidpoints.
Detailed studies of the gyroparallelogram and its extension to higher dimensional
gyroparallelepipeds are presented in \cite{mybook02,mybook03}.
The gyroparallelogram addition law plays an important role in our
gyrovector space approach to hyperbolic geometry, studied in \cite{mybook03,mybook04}.
\label{fig190k1m}}
\end{figure}


Indeed, the two gyrodiagonals of gyroquadrilateral $ABDC$ are the gyrosegments
$AD$ and $BC$, shown in Fig.~\ref{fig190k1m},
the gyromidpoints of which coincide, that is,
\begin{equation} \label{hkmdnf2}
\half\od(A\sqp D) =  \half\od(B\sqp C)
\end{equation}
The result in \eqref{hkmdnf2} follows
from the gyroparallelogram condition \eqref{hkmdnf1},
the implication in \eqref{hufns} and a
gyromidpoint identity in \eqref{eqgyromid}.

In his 1905 paper that founded the special theory of relativity \cite{einstein05},
Einstein noted about his addition law of relativistically admissible velocities:
\begin{quotation}
``Das Gesetz vom Parallelogramm der Geschwindigkeiten gilt also nach
unserer Theorie nur in erster Ann\"aherung.''
\begin{flushright}
A.~Einstein \cite{einstein05}
\end{flushright}
\end{quotation}
[English translation: Thus the law of velocity parallelogram is valid according to our
theory only to a first approximation.]

Indeed, Einstein velocity addition, $\op$, is noncommutative and does not give rise
to an exact ``velocity parallelogram'' in Euclidean geometry.
However, as we see in Fig.~\ref{fig190k1m}, Einstein velocity {\it coaddition}, $\sqp$,
which is commutative, does give rise to an exact
``velocity gyroparallelogram'' in hyperbolic geometry.

The breakdown of commutativity in Einstein velocity addition law seemed
undesirable to the famous mathematician \'Emile Borel.
Borel's resulting attempt to ``repair'' the seemingly ``defective'' Einstein velocity
addition in the years following 1912 is described by Walter in
\cite[p.~117]{walter99b}.
Here, however, we see that there is no need to repair Einstein velocity
addition law for being noncommutative since, despite of being noncommutative,
it gives rise to the gyroparallelogram law of gyrovector addition, which
turns out to be commutative.
The compatibility of our gyroparallelogram addition law of Einsteinian velocities
with cosmological observations of stellar aberration will be discussed in
Sec.~\ref{secc8}.
The extension of the gyroparallelogram addition law of $n=2$ summands to a
corresponding gyroparallelepiped addition law of $n>2$ summands is
presented in \cite[Theorem 10.6]{mybook03}.

\section{Vectors and Gyrovectors}\label{secc5}

\begin{figure}[t]  
 \sidebyside {       
\psfrag{A}[]{$P$}
\psfrag{B}[]{$Q$}
\psfrag{Ap}[]{$R$}
\psfrag{Bp}[]{$S$}
\psfrag{formula00}[]{$-P+Q=-R+S$}
\psfrag{formula01}[]{$d(P,Q)=\|-P+Q\|$}
\psfrag{---chets1}[]{\lower-1.2ex \hbox {\footnotesize{$\blacktriangleright$}}}
\psfrag{---chets2}[]{\lower-1.2ex \hbox {\footnotesize{$\blacktriangleright$}}}
 \includegraphics[width=0.5\textwidth]{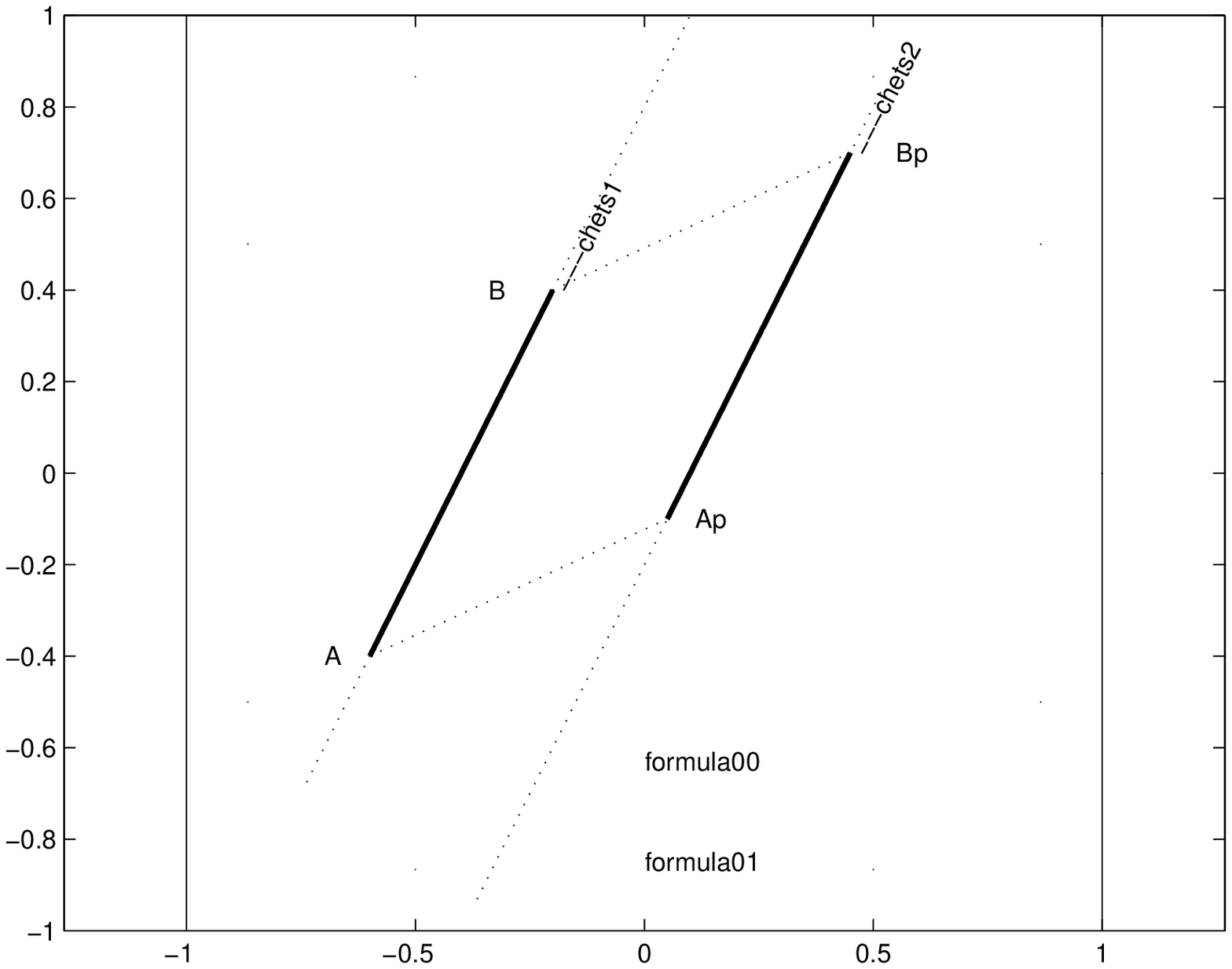}
\caption{
Two equivalent vectors in a Euclidean vector plane $(\Rtwo,+,\ccdot)$.
The two vectors are parallel and have equal values and, hence, equal lengths.
\label{fig221a4m}}}
  {
\psfrag{A}[]{$P$}
\psfrag{B}[]{$Q$}
\psfrag{Ap}[]{$R$}
\psfrag{Bp}[]{$S$}
\psfrag{formula00}[]{$\om P\op Q=\om R\op S$}
\psfrag{formula01}[]{$d(P,Q)=\|\om P\op Q\|$}
\psfrag{---chets1}[]{\lower-1.2ex \hbox {\footnotesize{$\blacktriangleright$}}}
 \psfrag{---chets2}[]{\lower-1.0ex \hbox {\footnotesize{$\blacktriangleright$}}}
 \includegraphics[width=0.5\textwidth]{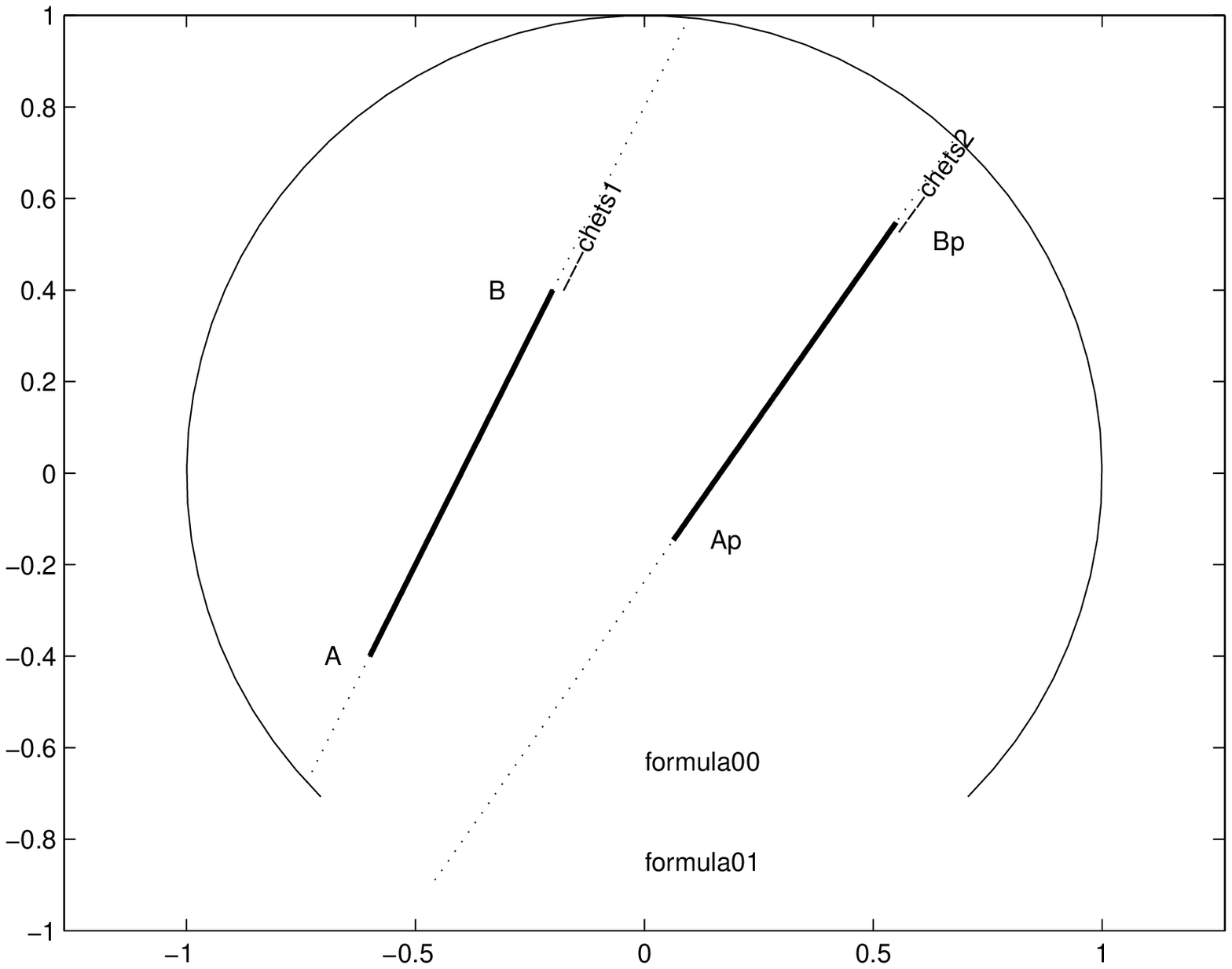}
\caption[]{
Two equivalent gyrovectors in an Einstein gyrovector plane $(\Rctwo,\op,\od)$.
The two gyrovectors have equal values and, hence, equal gyrolengths.
\label{fig221b4m}}}
\end{figure}

Elements of a real inner product space $\vi=(\vi,+,\ccdot)$, called
points and denoted by capital italic letters, $A,B,P,Q,$ etc,
give rise to vectors in $\vi$, denoted by bold roman lowercase letters $\ub,\vb,$ etc.
Any two ordered points $P,Q\inn\vi$ give rise to a unique rooted vector
$\vb\inn\vi$, rooted at the point $P$. It has a tail at the point $P$
and a head at the point $Q$, and it has the value $-P+Q$,
\begin{equation} \label{eq2rhkd01}
\vb =-P+Q
\end{equation}
The length of the rooted vector $\vb = -P+Q$ is the distance between the points
$P$ and $Q$, given by the equation
\begin{equation} \label{eq2rhkd02}
\|\vb\| = \|-P+Q\|
\end{equation}

Two rooted vectors $-P+Q$ and $-R+S$ are equivalent if they have the same
value, that is,
\begin{equation} \label{fksnc}
-P+Q ~~\thicksim~~-R+S \hspace{1.0cm} {\rm if~and~only~if} \hspace{1.0cm} -P+Q=-R+S
\end{equation}
The relation $\thicksim$ in \eqref{fksnc} between rooted vectors is
reflexive, symmetric and transitive, so that it is an equivalence relations that gives
rise to equivalence classes of rooted vectors.
Two equivalent rooted vectors in a Euclidean vector plane are shown in Fig.~\ref{fig221a4m}.
To liberate rooted vectors from
their roots we define
a {\it vector} to be an equivalence class of rooted vectors.
The vector $-P+Q$ is thus a representative of all rooted vectors with value $-P+Q$.

A point $P\inn\vi$ is identified with the vector $-O+P$,
$O$ being the arbitrarily selected origin of the space $\vi$.
Hence, the algebra of vectors can be applied to points as well.
Naturally, geometric and physical properties regulated by a vector space are
independent of the choice of the origin.

Let $A,B,C\in\vi$ be three non-collinear points, and let
\begin{equation} \label{ksfr1}
\begin{split}
\ub &= -A+B \\
\vb &= -A+C
\end{split}
\end{equation}
be two vectors in $\vi$ that, without loss of generality,
possess the same tail, $A$. Furthermore, let $D$ be a point of $\vi$ given by the
parallelogram condition
\begin{equation} \label{ksfr2}
D = B+C-A
\end{equation}
Then, the quadrilateral $ABDC$ is a parallelogram in Euclidean geometry in the sense
that its two diagonals, $AD$ and $BC$, intersect at their midpoints, that is,
\begin{equation} \label{ksfr3}
\half(A+D) = \half(B+C)
\end{equation}
Then, the vector addition of the vectors $\ub$ and $\vb$
that generate the parallelogram $ABDC$
is the vector $\wb$
given by the well-known parallelogram addition law,
\begin{equation} \label{ksfr4}
\wb = -A+D
\end{equation}
where $\wb$ is the vector formed by the diagonal $AD$ of the parallelogram $ABDC$.

Vectors in the space $\vi$
are, thus, equivalence classes of ordered pairs of points that add according
to the parallelogram law.

Gyrovectors emerge in Einstein gyrovector spaces $\VS=(\VS,\op,\od)$ in a way fully analogous
to the way vectors emerge in the spaces $\vi$.
Elements of $\VS$, called
points and denoted by capital italic letters, $A,B,P,Q,$ etc,
give rise to gyrovectors in $\VS$, denoted by bold roman lowercase letters $\ub,\vb,$ etc.
Any two ordered points $P,Q\inn\VS$ give rise to a unique rooted gyrovector
$\vb\inn\VS$, rooted at the point $P$. It has a tail at the point $P$
and a head at the point $Q$, and it has the value $\om P\op Q$,
\begin{equation} \label{eq2rhkd01g}
\vb =\om P\op Q
\end{equation}
The gyrolength of the rooted gyrovector $\vb = \om P\op Q$ is the gyrodistance between the points
$P$ and $Q$, given by the equation
\begin{equation} \label{eq2rhkd02g}
\|\vb\| = \|\om P\op Q\|
\end{equation}

Two rooted gyrovectors $\om P\op Q$ and $\om R\op S$ are equivalent
if they have the same value, that is,
\begin{equation} \label{fksnd}
\om P\op Q ~~\thicksim~~ \om R\op S \hspace{1.0cm} {\rm if~and~only~if} \hspace{1.0cm}
\om P\op Q= \om R \op S
\end{equation}
The relation $\thicksim$ in \eqref{fksnd} between rooted gyrovectors is
reflexive, symmetric and transitive, so that it is an equivalence relation that gives
rise to equivalence classes of rooted gyrovectors.
Two equivalent rooted gyrovectors in an Einstein gyrovector plane are shown in
Fig.~\ref{fig221b4m}.
To liberate rooted gyrovectors from
their roots we define
a {\it gyrovector} to be an equivalence class of rooted gyrovectors.
The gyrovector $\om P\op Q$ is thus a representative of all rooted gyrovectors
with value $\om P\op Q$.

A point $P$ of a gyrovector space $(\VS,\op,\od)$ is identified with the gyrovector $\om O\op P$,
$O$ being the arbitrarily selected origin of the space $\VS$.
Hence, the algebra of gyrovectors can be applied to points as well.
Naturally, geometric and physical properties regulated by a gyrovector space are
independent of the choice of the origin.

Let $A,B,C\in\VS$ be three non-gyrocollinear points of an Einstein gyrovector space
$(\VS,\op,\od)$, and let
\begin{equation} \label{ktfr1}
\begin{split}
\ub &= \om A\op B \\
\vb &= \om A\op C
\end{split}
\end{equation}
be two gyrovectors in $\vi$ that, without loss of generality,
possess the same tail, $A$. furthermore, let $D$ be a point of $\VS$ given by the
gyroparallelogram condition
\begin{equation} \label{ktfr2}
D = (B\sqp C)\om A
\end{equation}
Then, the gyroquadrilateral $ABDC$ is a gyroparallelogram in the
Beltrami-Klein ball model of hyperbolic geometry in the sense
that its two gyrodiagonals, $AD$ and $BC$, intersect at their gyromidpoints, that is,
\begin{equation} \label{ktfr3}
\half(A\sqp D) = \half(B\sqp C)
\end{equation}
as explained in Sec.~\ref{secc4} and illustrated in Fig.~\ref{fig190k1m}.
Then, the gyrovector addition of the gyrovectors $\ub$ and $\vb$
that generate the gyroparallelogram $ABDC$ is the gyrovector $\wb$
given by the gyroparallelogram addition law, Fig.~\ref{fig190k1m},
\begin{equation} \label{ktfr4}
\wb = \om A \op D = \ub \sqp \vb
\end{equation}
where $\wb$ is the gyrovector formed by the gyrodiagonal $AD$ of the gyroparallelogram $ABDC$.

Gyrovectors in the ball $\VS$
are, thus, equivalence classes of ordered pairs of points that add according
to the gyroparallelogram law.

\section{Gyrotrigonometry in Einstein Gyrovector Spaces}\label{secc6}

  
\begin{figure}[t]  
 \centering         
\psfrag{O}[]{$\phantom{O}$}
\psfrag{A}[]{$A$}
\psfrag{B}[]{\lower-1.2ex \hbox {$\hspace{0.2cm}B$}}
\psfrag{C}[]{$C$}
\psfrag{pa}{$a$}
\psfrag{pb}{$b$}
\psfrag{pc}{$c$}
\psfrag{text1}[]{$\cb=\om A \op B$}
\psfrag{text2}[]{$\ab=\om B \op C$}
\psfrag{text2a}[]{$\ab^\prime=\om C\op B=\om\gyr[\om C,B]\ab$}
\psfrag{text3}[]{$\bb=\om C \op A$}
\psfrag{al}[]{$\alpha$}
\psfrag{be}[]{$\beta$}
\psfrag{ga}[]{$\gamma$}
\psfrag{formula01}{$a=\|\ab\|=\|\om B \op C\| = |BC|$}
\psfrag{formula02}{$b=\|\bb\|=\|\om C \op A\| = |CA|$}
\psfrag{formula03}{$c=\|\cb\|=\|\om A \op B\| = |AB|$}
\psfrag{formula03a}{$a_s=\frac{a}{s},\hspace{0.4cm}b_s=\frac{b}{s},\hspace{0.4cm}c_s=\frac{c}{s}$}
\psfrag{formula04}[]{$\cos\alpha=\frac{\om A\op B}{\|\om A\op B\|}
\ccdot \frac{\om A\op C}{\|\om A\op C\|}$}
\psfrag{formula05}[]{$\cos\beta =\frac{\om B\op A}{\|\om B\op A\|}
\ccdot \frac{\om B\op C}{\|\om B\op C\|}$}
\psfrag{formula06}[]{$\cos\gamma=\frac{\om C\op A}{\|\om C\op A\|}
\ccdot \frac{\om C\op B}{\|\om C\op B\|}$}
\psfrag{fig171ec4}{$\delta=\pi-(\alpha+\beta+\gamma)>0$}
\psfrag{----chets1}[]{\lower-1.2ex \hbox {$\blacktriangleright$}}
\psfrag{----chets2}[]{\lower-1.26ex \hbox {$\blacktriangleright$}}
\psfrag{----chets3}[]{\lower-1.2ex \hbox {$\blacktriangleright$}}
 \includegraphics[width=10cm]{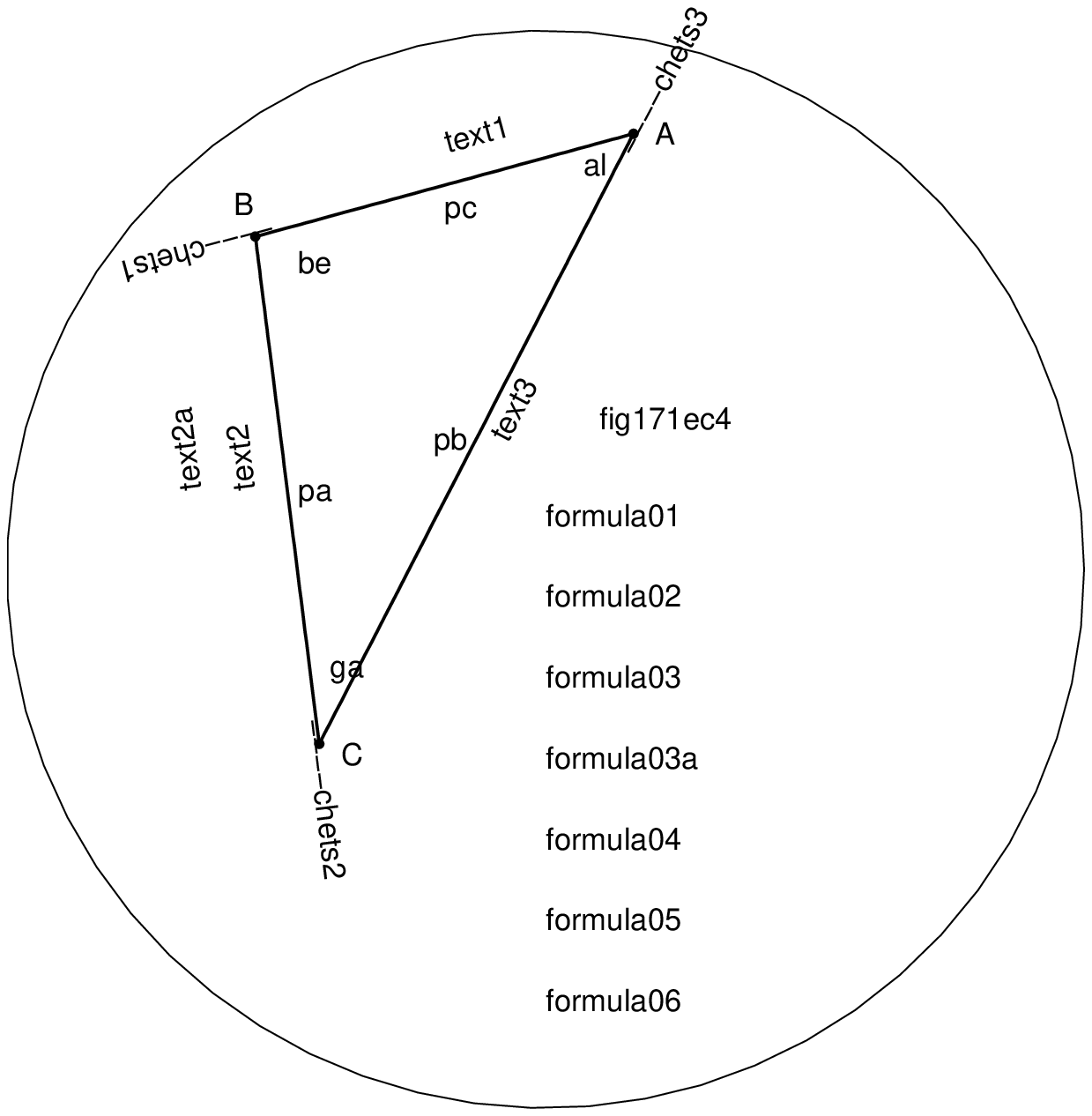}
\caption{
The gyrotriangle, and its standard notation, in an Einstein gyrovector space
$(\VS,\op,\od)$.
\label{fig171ec4m}}
\end{figure}

Let $A,B$ and $C$ be three non-gyrocollinear points
in an Einstein gyrovector space $(\VS,\op,\od)$, and
let $\om A \op B$ and $\om A \op C$ be the resulting two nonzero gyrovectors
with the common tail $A$ and included gyroangle $\alpha$,
as shown in Fig.~\ref{fig171ec4m}.
The gyrolength of gyrovector $\om A \op B$, $\|\om A \op B \|$, is nonzero,
and its associated gyrovector
\begin{equation} \label{eq5thjks}
\frac{\om A \op B }{\|\om A \op B \|}
\end{equation}
is called a unit gyrovector.
Guided by analogies with Euclidean trigonometry, the gyrocosine, cos, of the gyroangle
$\alpha$ included by the two gyrovectors that emanate from point $A$
is given by the equation
\begin{equation}\label{eqangle00}
\cos  \alpha=\frac{\om A \op B }{\|\om A \op B \|}\ccdot
\frac{\om A \op C }{\|\om A \op C \|}
\end{equation}
as illustrated in Fig.~\ref{fig171ec4m} for gyrotriangle gyroangles.
The gyroangle is invariant under the group of motions of any
Einstein gyrovector space, that is, under {\it left gyrotranslations} and
rotations of the space \cite[Theorem 8.6]{mybook03}.

Applying the gamma identity \eqref{eqgupv00} and the definition of the gyrocosine function
in \eqref{eqangle00} to the gyrovector sides of a gyrotriangle $ABC$, using the standard
gyrotriangle notation in Fig.~\ref{fig171ec4m}, we obtain the following
{\it law of gyrocosines}
\cite[Sec.~4.5]{mybook04},
\cite[Sec.~12.2]{mybook03},
 \begin{equation} \label{jsf01}
 \begin{split}
 \gamma_a &= \gamma_b\gamma_c (1-b_s c_s \cos\alpha) \\
 \gamma_b &= \gamma_a\gamma_c (1-a_s c_s \cos\beta ) \\
 \gamma_c &= \gamma_a\gamma_b (1-a_s b_s \cos\gamma)
 \end{split}
 \end{equation}

Like Euclidean triangles, the gyroangles of a gyrotriangle are
uniquely determined by its sides.
Solving the system \eqref{jsf01} of three identities for the
three unknowns $\cos\alpha$, $\cos\beta$ and $\cos\gamma$,
and employing \eqref{tksnw}, we obtain the following theorem.
\begin{ttheorem}\label{thmgyrocos}
{\bf (The Law of Gyrocosines;
The $SSS$ to $AAA$ Conversion Law).}
Let $ABC$ be a gyrotriangle in an Einstein gyrovector space $(\VS,\op,\od)$.
Then, in the gyrotriangle notation in Fig.~\ref{fig171ec4m},

\begin{equation} \label{jsf02}
\begin{split}
\cos\alpha &= \frac{-\gamma_a + \gamma_b \gamma_c}
                   {\gamma_b \gamma_c b_s c_s} 
= \frac{-\gamma_a + \gamma_b \gamma_c}
       {\sqrt{\gamma_b^2 -1} \sqrt{\gamma_c^2 -1}}
\\[6pt]
\cos\beta  &= \frac{-\gamma_b + \gamma_a \gamma_c}
                   {\gamma_a \gamma_c a_s c_s} 
= \frac{-\gamma_b + \gamma_a \gamma_c}
       {\sqrt{\gamma_a^2 -1} \sqrt{\gamma_c^2 -1}}
\\[6pt]
\cos\gamma &= \frac{-\gamma_c + \gamma_a \gamma_b}
                   {\gamma_a \gamma_b a_s b_s} 
= \frac{-\gamma_c + \gamma_a \gamma_b}
       {\sqrt{\gamma_a^2 -1} \sqrt{\gamma_b^2 -1}}
\end{split}
\end{equation}
\end{ttheorem}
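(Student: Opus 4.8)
The plan is to exploit the fact that the system \eqref{jsf01} is already \emph{decoupled}: each of its three identities contains exactly one gyrocosine, so in spite of the phrase ``solving the system'' there is no genuinely coupled linear system to invert. Instead I would solve each identity separately for its own unknown. Starting from the first identity, $\gamma_a = \gamma_b\gamma_c(1 - b_s c_s \cos\alpha)$, I divide by $\gamma_b\gamma_c$ (legitimate since every gamma factor is at least $1$ and hence nonzero) and rearrange to obtain $b_s c_s \cos\alpha = 1 - \gamma_a/(\gamma_b\gamma_c) = (\gamma_b\gamma_c - \gamma_a)/(\gamma_b\gamma_c)$, so that
\[
\cos\alpha = \frac{-\gamma_a + \gamma_b\gamma_c}{\gamma_b\gamma_c\, b_s c_s}.
\]
This is precisely the first displayed form in \eqref{jsf02}, and the identical manipulation applied cyclically to the second and third identities of \eqref{jsf01} yields the first forms for $\cos\beta$ and $\cos\gamma$.

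The second step is to pass from these first forms to the purely gamma-factor forms using the identity \eqref{tksnw}. Applied to the gyrolength of each gyrotriangle side, \eqref{tksnw} gives $b_s^2 = (\gamma_b^2-1)/\gamma_b^2$ and $c_s^2 = (\gamma_c^2-1)/\gamma_c^2$, hence $\gamma_b b_s = \sqrt{\gamma_b^2-1}$ and $\gamma_c c_s = \sqrt{\gamma_c^2-1}$. Substituting these into the denominator of the expression for $\cos\alpha$ collapses the product $\gamma_b\gamma_c\, b_s c_s = (\gamma_b b_s)(\gamma_c c_s)$ to $\sqrt{\gamma_b^2-1}\sqrt{\gamma_c^2-1}$, producing the second form
\[
\cos\alpha = \frac{-\gamma_a + \gamma_b\gamma_c}{\sqrt{\gamma_b^2-1}\sqrt{\gamma_c^2-1}},
\]
and similarly, by the cyclic symmetry $a\to b\to c\to a$, for $\cos\beta$ and $\cos\gamma$.

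The only point requiring care, and the nearest thing to an obstacle, is the choice of sign when extracting the square root in $\gamma_b b_s = \pm\sqrt{\gamma_b^2-1}$. Here I would invoke that each gamma factor satisfies $\gamma\ge 1$ and that each side gyrolength is nonnegative, so $\gamma_b b_s\ge 0$ forces the positive root; this removes any ambiguity and completes the passage to the stated forms. Everything else is elementary algebra, so beyond correctly tracking the cyclic symmetry of the three cases I expect no substantive difficulty.
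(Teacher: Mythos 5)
Your proposal is correct and is essentially the paper's own argument: the paper proves the theorem by ``solving the system \eqref{jsf01}\ldots and employing \eqref{tksnw}'', which, as you observe, amounts to solving each identity of \eqref{jsf01} separately for its single gyrocosine and then substituting $\gamma_b b_s = \sqrt{\gamma_b^2-1}$, $\gamma_c c_s = \sqrt{\gamma_c^2-1}$ from \eqref{tksnw}. Your explicit justification of the positive square root via $\gamma\ge 1$ and $b_s, c_s \ge 0$ is a detail the paper leaves implicit, but it changes nothing in the route taken.
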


The identities in \eqref{jsf02} form the
$SSS$ (Side-Side-Side) to $AAA$ (gyroAngle-gyroAngle-gyroAngle) conversion law in
Einstein gyrovector spaces.
This law is useful for calculating the gyroangles of a gyrotriangle in an
Einstein gyrovector space when its sides (that is, its side-gyrolengths) are known.

In full analogy with the trigonometry of triangles,
the {\it gyrosine}
of a gyrotriangle gyroangle $\alpha$ is nonnegative,
given by the equation
\begin{equation} \label{djbshe}
\sin\alpha = \sqrt{1-\cos^2\alpha}
\end{equation}
Hence, it follows from Theorem \ref{thmgyrocos} that the gyrosine of the
gyrotriangle gyroangles in that Theorem are given by
\begin{equation} \label{fskbrz}
\begin{split}
\sin\alpha &= \frac{
\sqrt{1 + 2\gamma_a \gamma_b \gamma_c - \gamma_a^2-\gamma_b^2-\gamma_c^2}
}
{\sqrt{\gamma_b^2 -1} \sqrt{\gamma_c^2 -1}}
\\[4pt]
\sin\beta  &= \frac{
\sqrt{1 + 2\gamma_a \gamma_b \gamma_c - \gamma_a^2-\gamma_b^2-\gamma_c^2}
}
{\sqrt{\gamma_a^2 -1} \sqrt{\gamma_c^2 -1}}
\\[4pt]
\sin\gamma &= \frac{
\sqrt{1 + 2\gamma_a \gamma_b \gamma_c - \gamma_a^2-\gamma_b^2-\gamma_c^2}
}
{\sqrt{\gamma_a^2 -1} \sqrt{\gamma_b^2 -1}}
\end{split}
\end{equation}

Identities \eqref{fskbrz} immediately give rise to the identities
\begin{equation} \label{rksnf}
\frac{\sin\alpha}{ \sqrt{\gamma_a^2-1}}
= \frac{\sin\beta }{ \sqrt{\gamma_b^2-1}}
= \frac{\sin\gamma}{ \sqrt{\gamma_c^2-1}}
\end{equation}
that form the law of gyrosines.

Unlike Euclidean triangles, the side gyrolengths of a gyrotriangle are
uniquely determined by its gyroangles, as the following theorem demonstrates.
\begin{ttheorem}\label{angtoside}
{\bf (The $AAA$ to $SSS$ Conversion Law).}
Let $ABC$ be a gyrotriangle
in an Einstein gyrovector space $(\VS,\op,\od)$.
Then, in the gyrotriangle notation in Fig.~\ref{fig171ec4m},
\begin{equation} \label{jsf04}
\begin{split}
\gamma_a &= \frac{\cos\alpha+\cos\beta\cos\gamma}{\sin\beta\sin\gamma}\\[4pt]
\gamma_b &= \frac{\cos\beta+\cos\alpha\cos\gamma}{\sin\alpha\sin\gamma}\\[4pt]
\gamma_c &= \frac{\cos\gamma+\cos\alpha\cos\beta}{\sin\alpha\sin\beta}
\end{split}
\end{equation}
\end{ttheorem}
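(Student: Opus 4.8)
The plan is to prove the three identities in \eqref{jsf04} by substituting the closed-form gyrocosines from Theorem~\ref{thmgyrocos}, equation \eqref{jsf02}, and the gyrosines from \eqref{fskbrz} into their right-hand sides and simplifying. Because the gyrotriangle notation of Fig.~\ref{fig171ec4m} is cyclically symmetric under the simultaneous permutations $a\to b\to c\to a$ and $\alpha\to\beta\to\gamma\to\alpha$, it is enough to establish the first identity, for $\gamma_a$; the other two then follow at once by relabeling.

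First I would abbreviate the common radicand of the three gyrosines in \eqref{fskbrz}, writing $S^2 = 1 + 2\gamma_a\gamma_b\gamma_c - \gamma_a^2 - \gamma_b^2 - \gamma_c^2$, so that $\sin\beta = S/(\sqrt{\gamma_a^2-1}\,\sqrt{\gamma_c^2-1})$ and $\sin\gamma = S/(\sqrt{\gamma_a^2-1}\,\sqrt{\gamma_b^2-1})$. The decisive structural observation is that every gyrocosine in \eqref{jsf02} and every gyrosine in \eqref{fskbrz} carries a denominator that is a product of two factors of the form $\sqrt{\gamma_\bullet^2-1}$. Consequently, when I place $\cos\alpha + \cos\beta\cos\gamma$ over the common denominator $(\gamma_a^2-1)\sqrt{\gamma_b^2-1}\,\sqrt{\gamma_c^2-1}$ and divide by $\sin\beta\sin\gamma = S^2/\big[(\gamma_a^2-1)\sqrt{\gamma_b^2-1}\,\sqrt{\gamma_c^2-1}\big]$, all the radical factors cancel, and the claimed value $\gamma_a$ of the quotient reduces to the single polynomial identity
\begin{equation}
(\gamma_b\gamma_c-\gamma_a)(\gamma_a^2-1) + (\gamma_a\gamma_c-\gamma_b)(\gamma_a\gamma_b-\gamma_c) = \gamma_a\,S^2 .
\end{equation}

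The step I expect to be the only real work is verifying this polynomial identity. I would expand the left-hand side directly: the term $-\gamma_b\gamma_c$ from the first product cancels the term $+\gamma_b\gamma_c$ from the second, the two remaining mixed terms combine to $2\gamma_a^2\gamma_b\gamma_c$, and collecting everything gives $\gamma_a\big(1 + 2\gamma_a\gamma_b\gamma_c - \gamma_a^2 - \gamma_b^2 - \gamma_c^2\big) = \gamma_a\,S^2$, exactly the right-hand side. Finally, since the gamma factors of a genuine gyrotriangle all exceed $1$ and $S^2 > 0$, the positive square roots used in \eqref{jsf02} and \eqref{fskbrz} are unambiguous, so the substitution is valid and the formula for $\gamma_a$ holds; the formulas for $\gamma_b$ and $\gamma_c$ follow by the cyclic symmetry noted above.
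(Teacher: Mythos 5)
Your proposal is correct, and in essence it follows the paper's own route: both arguments obtain \eqref{jsf04} by substituting the closed-form expressions of the $SSS$-to-$AAA$ law \eqref{jsf02} and simplifying, with the second and third identities then following by permutation of vertices. The one genuine difference is how the sign/root issue is handled. The paper's printed computation \eqref{jsf03} establishes only the squared identity $\left(\frac{\cos\alpha+\cos\beta\cos\gamma}{\sin\beta\sin\gamma}\right)^2=\gamma_a^2$, using $\sin^2\beta=1-\cos^2\beta$ and $\sin^2\gamma=1-\cos^2\gamma$, and then passes tacitly to the unsquared identity --- a step that strictly speaking needs the positivity of the quotient (or of $\cos\alpha+\cos\beta\cos\gamma$, since $\gamma_a\ge1$). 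You avoid squaring altogether by substituting the explicitly nonnegative gyrosines \eqref{fskbrz} (which the paper itself derives from \eqref{jsf02} via the convention \eqref{djbshe}), so that, with $S^2=1+2\gamma_a\gamma_b\gamma_c-\gamma_a^2-\gamma_b^2-\gamma_c^2$, everything reduces to the polynomial identity $(\gamma_b\gamma_c-\gamma_a)(\gamma_a^2-1)+(\gamma_a\gamma_c-\gamma_b)(\gamma_a\gamma_b-\gamma_c)=\gamma_a S^2$. I checked your expansion and it is right: the $\mp\gamma_b\gamma_c$ terms cancel, the two mixed terms combine to $2\gamma_a^2\gamma_b\gamma_c$, and the remainder factors as $\gamma_a\left(1+2\gamma_a\gamma_b\gamma_c-\gamma_a^2-\gamma_b^2-\gamma_c^2\right)$; together with $\gamma_a,\gamma_b,\gamma_c>1$ and $S^2>0$ for a genuine (nondegenerate) gyrotriangle, the radicals are unambiguous and the division is legitimate. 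So your argument is, if anything, slightly tighter than the printed one: it makes the ``straightforward'' algebra fully explicit and disposes of the square-root sign point that the paper's proof leaves implicit.
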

\begin{proof}
Let $ABC$ be a gyrotriangle in an Einstein gyrovector space
$(\VS,\od,\op)$ with its standard notation in Fig.~\ref{fig171ec4m}.
It follows straightforwardly from the
$SSS$ to $AAA$ conversion law \eqref{jsf02} that
\begin{equation} \label{jsf03}
\left(
\frac{\cos\alpha+\cos\beta\cos\gamma}{\sin\beta\sin\gamma}
\right)^2
=
\frac{(\cos\alpha+\cos\beta\cos\gamma)^2}{(1-\cos^2\beta)(1-\cos^2\gamma)}
= \gamma_a^2
\end{equation}
implying the first identity in \eqref{jsf04}.
The remaining two identities in \eqref{jsf04} are obtained
from \eqref{jsf02} by permutation of vertices.
\end{proof} 

The identities in \eqref{jsf04} form the $AAA$ to $SSS$ conversion law.
This law is useful for calculating the sides
(that is, the side-gyrolengths)
of a gyrotriangle in an Einstein gyrovector space when its gyroangles are known.
Thus, for instance, $\gamma_a$ is obtained from the first identity in \eqref{jsf04},
and $a$ is obtained from $\gamma_a$ by Identity \eqref{tksnw}.

\begin{figure}[t]  
 \centering         
\psfrag{pa}{$\hspace{-0.1cm} A $}
\psfrag{pb}{$ B $}
\psfrag{pc}{$\hspace{-0.3cm} C $}
\psfrag{paa}{$\hspace{0.1cm}\ab$}
\psfrag{pbb}{$\bb$}
\psfrag{pcc}{$\cb$}
\psfrag{alfa}{$\alpha$}
\psfrag{beta}{$\beta$}
\psfrag{gama}{$\gamma=\pi/2$}
\psfrag{f1}{$\ab=\om B \op C  \hspace{0.6cm}a=\|\ab\|$}
\psfrag{f2}{$\bb=\om C \op A  \hspace{0.6cm}b=\|\bb\|$}
\psfrag{f3}{$\cb=\om A \op B  \hspace{0.64cm}c=\|\cb\|$}
\psfrag{f4}{$\sin\alpha
=\displaystyle\frac{\gamma_a a}{\gamma_c c}\,,$}
\psfrag{f5}{$\sin\beta 
=\displaystyle\frac{\gamma_b b}{\gamma_c c}\,,$}
 \psfrag{g6}{$\gamma_c=\gamma_a\gamma_b$}
 \psfrag{f6}{$\cos\alpha
  =\displaystyle\frac{b}{c}
                        =\displaystyle\frac{\bb\ccdot\cb}{bc}\,,$}
 \psfrag{f7}{$\cos\beta
  =\displaystyle\frac{a}{c}
                        =\displaystyle\frac{\ab\ccdot\cb}{ac}\,,$}
 \psfrag{f8}{$\sin^2\alpha + \cos^2\alpha = 1$}
 \psfrag{f9}{$\sin^2\beta  + \cos^2\beta  = 1$}
 \includegraphics[width=9cm]{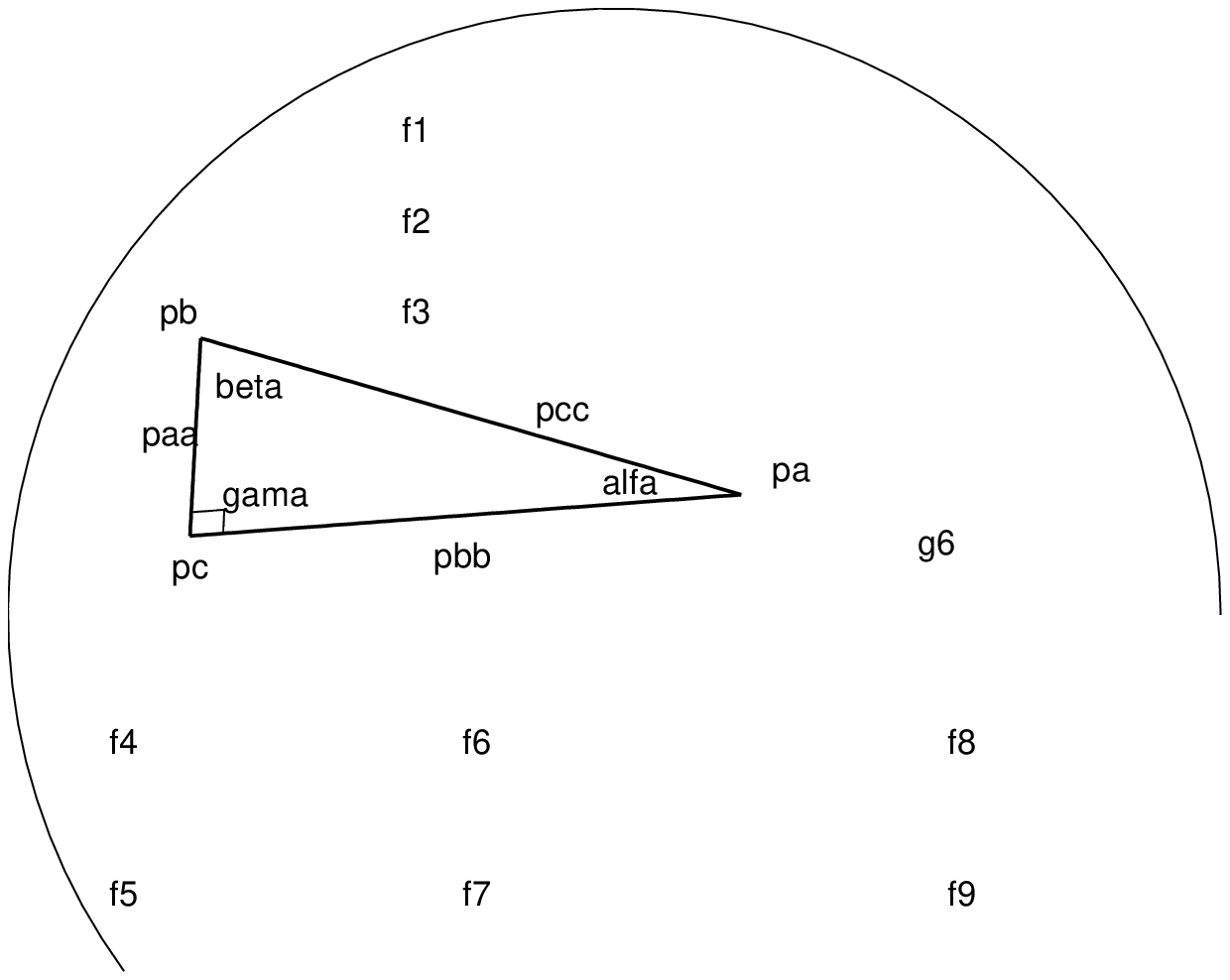}
\caption{
Gyrotrigonometry in an Einstein gyrovector plane $(\Rstwo,\op,\od)$.
\label{fig149ein04dm}}
\end{figure}

Let $ABC$ be a right gyrotriangle in an Einstein gyrovector space
$(\VS,\op,\od)$ with the right gyroangle $\gamma=\pi/2$,
as shown in Fig.~\ref{fig149ein04dm} for $\VS=\Rstwo$.
It follows from \eqref{jsf04} with $\gamma=\pi/2$ that the
sides $a,b$ and $c$ of gyrotriangle $ABC$ in Fig.~\ref{fig149ein04dm}
are related to the acute gyroangles
$\alpha$ and $\beta$ of the gyrotriangle by the equations
\begin{equation} \label{jsf04a}
\begin{split}
\gamma_a^{\phantom{O}} &= \frac{\cos\alpha}{\sin\beta}\\[6pt]
\gamma_b^{\phantom{O}} &= \frac{\cos\beta}{\sin\alpha}\\[6pt]
\gamma_c^{\phantom{O}} &= \frac{\cos\alpha\cos\beta}{\sin\alpha\sin\beta}
\end{split}
\end{equation}

The identities in \eqref{jsf04a} imply
the Einstein-Pythagoras Identity
\begin{equation} \label{jsf04b}
\gamma_a^{\phantom{O}} \gamma_b^{\phantom{O}} = \gamma_c^{\phantom{O}}
\end{equation}
for a right gyrotriangle $ABC$
with hypotenuse $c$ and legs $a$ and $b$
in an Einstein gyrovector space, Fig.~\ref{fig149ein04dm}.

Let $a,~b$ and $c$ be the respective gyrolengths of the
two legs $\ab,~\bb$ and the hypotenuse $\cb$ of a right gyrotriangle $ABC$
in an Einstein gyrovector space $(\VS,\op,\od)$, Fig.~\ref{fig149ein04dm}.
By \eqref{tksnw}, p.~\pageref{tksnw},
and \eqref{jsf04a} we have
 \begin{equation} \label{ksnre}
 \begin{split}
 \left(\frac{a}{c}\right)^2 &= \frac{(\gamma_a^2-1)/\gamma_a^2}
                                    {(\gamma_c^2-1)/\gamma_c^2}
 =\cos^2\beta
 \\[8pt]
 \left(\frac{b}{c}\right)^2 &= \frac{(\gamma_b^2-1)/\gamma_b^2}
                                    {(\gamma_c^2-1)/\gamma_c^2}
 =\cos^2\alpha
 \end{split}
 \end{equation}
where $\gamma_a^{\phantom{O}}$, $\gamma_b^{\phantom{O}}$ and $\gamma_c^{\phantom{O}}$
are related by \eqref{jsf04b}.

Similarly, by \eqref{tksnw},
and \eqref{jsf04a} we also have
 \begin{equation} \label{ksnrd}
 \begin{split}
 \left(\frac{\gamma_a a}{\gamma_c c}\right)^2
 &= \frac{\gamma_a^2-1}{\gamma_c^2-1}
 =\sin^2\alpha
 \\[8pt]
 \left(\frac{\gamma_b b}{\gamma_c c}\right)^2
 &= \frac{\gamma_b^2-1}{\gamma_c^2-1}
 =\sin^2\beta
 \end{split}
 \end{equation}

Identities \eqref{ksnre} and \eqref{ksnrd} imply
\begin{equation} \label{ksnrd01}
\begin{split}
 \left(\frac{a}{c}\right)^2 +
\left(\frac{\gamma_b b}{\gamma_c c}\right)^2
&=1 \\[8pt]
\left(\frac{\gamma_a a}{\gamma_c c}\right)^2 +
 \left(\frac{b}{c}\right)^2
&=1
\end{split}
 \end{equation}
and, as shown in Fig.~\ref{fig149ein04dm},
\begin{equation} \label{ksnrd02}
\begin{split}
\cos\alpha &= \frac{b}{c}
\\[8pt]
\cos\beta  &= \frac{a}{c}
\end{split}
 \end{equation}
and
\begin{equation} \label{ksnrd03}
\begin{split}
\sin\alpha  &= \frac{\gamma_a a}{\gamma_c c}
\\[8pt]
\sin\beta   &= \frac{\gamma_b b}{\gamma_c c}
\end{split}
 \end{equation}

Interestingly, we see from \eqref{ksnrd02}\,--\,\eqref{ksnrd03}
that the gyrocosine function of an acute gyroangle
of a right gyrotriangle in an Einstein gyrovector space
has the same form as its Euclidean counterpart, the cosine function.
In contrast, it is only modulo gamma factors that
the gyrosine function has the same form as its Euclidean counterpart,
the sine function.

Identities \eqref{ksnrd01}
give rise to the following two distinct
Einsteinian-Pythagorean identities,
\begin{equation} \label{ktnrd01}
\begin{split}
a^2 + \left(\frac{\gamma_b}{\gamma_c}\right)^2 b^2 &= c^2 \\[8pt]
\left(\frac{\gamma_a}{\gamma_c}\right)^2 a^2 + b^2 &= c^2
\end{split}
 \end{equation}
for a right gyrotriangle
with hypotenuse $c$ and legs $a$ and $b$ in an Einstein gyrovector space.
The two distinct Einsteinian-Pythagorean identities
in \eqref{ktnrd01} that each Einsteinian right gyrotriangle possesses
converge in the Newtonian-Euclidean limit of large $s$,
$s\rightarrow\infty$,
to the single Pythagorean identity
\begin{equation} \label{ktgrd01}
a^2 + b^2 = c^2
\end{equation}
that each Euclidean right-angled triangle possesses.

\section{The Interpretation of\\
Classical Stellar Aberration by\\ Vectors and Trigonometry\\
$~~~~~~~~~~~~~~~~~~~~~~~~~~~~~~$
and\\
The Interpretation of\\
Relativistic Stellar Aberration by\\ Gyrovectors and Gyrotrigonometry}\label{secc7}

The universe is our laboratory, we are the experimenters asking nature whether,
in the limit of negligible force, relativistic velocities add
\begin{itemize}
\item[$\phantom{i}$]
\begin{itemize}
\item[$(1)$]
according to Einstein's 1905 velocity addition law \eqref{eq01}, or
\item[$(2)$]
according to the Einstein gyroparallelogram addition law \eqref{ktfr4}.
\end{itemize}
\end{itemize}

Fortunately, the cosmological phenomenon of {\it stellar aberration} comes to the rescue,
as we will see in this section.
We will find that owing to the
validity of well-known relativistic stellar aberration formulas,
\begin{itemize}
\item[$\phantom{i}$]
\begin{itemize}
\item[$(i)$]
Einsteinian, relativistic velocities are gyrovectors that add
according to the gyroparallelogram addition law \eqref{ktfr4}, Fig.~\ref{fig190k1m},
which is commutative, just as
\item[$(ii)$]
Newtonian, classical velocities are vectors that add
according to the common parallelogram addition law.
\end{itemize}
\end{itemize}

To set the stage for our study of the relativistic stellar aberration, we begin
with the study of the
{\it classical particle aberration}, Fig.~\ref{fig206sa4m}, which will be extended
to the study of the
{\it relativistic particle aberration}, Fig.~\ref{fig206sb4m},
by gyro-analogies.

 
%
\begin{figure}[t]  
 \sidebyside {       
\psfrag{E}[]{E}
\psfrag{S}[]{S}
\psfrag{X}[]{Q}
\psfrag{P}[]{P}
\psfrag{O}[]{O}
\psfrag{pbs}[]{$\hspace{0.2cm}\pb_s$}
\psfrag{pbe}[]{$\hspace{0.32cm}\pb_e$}
\psfrag{vb}[]{$\leftarrow\vb$}
\psfrag{yb}[]{$\bb$}
\psfrag{xb}[]{$\ab$}
\psfrag{tetas}[]{$\hspace{-0.18cm}\theta_{\!s}$}
\psfrag{tetae}[]{$\hspace{-0.50cm}\theta_e$}
\psfrag{---A}{$\rightarrow$}
\psfrag{---B}{$\rightarrow$}
\psfrag{formula00}[]{$\vb= -  E  +  S \in \Rtwo$}
\psfrag{formula01}[]{\hspace{0.8cm} The velocity vector $\vb$,}
 \includegraphics[width=0.5\textwidth]{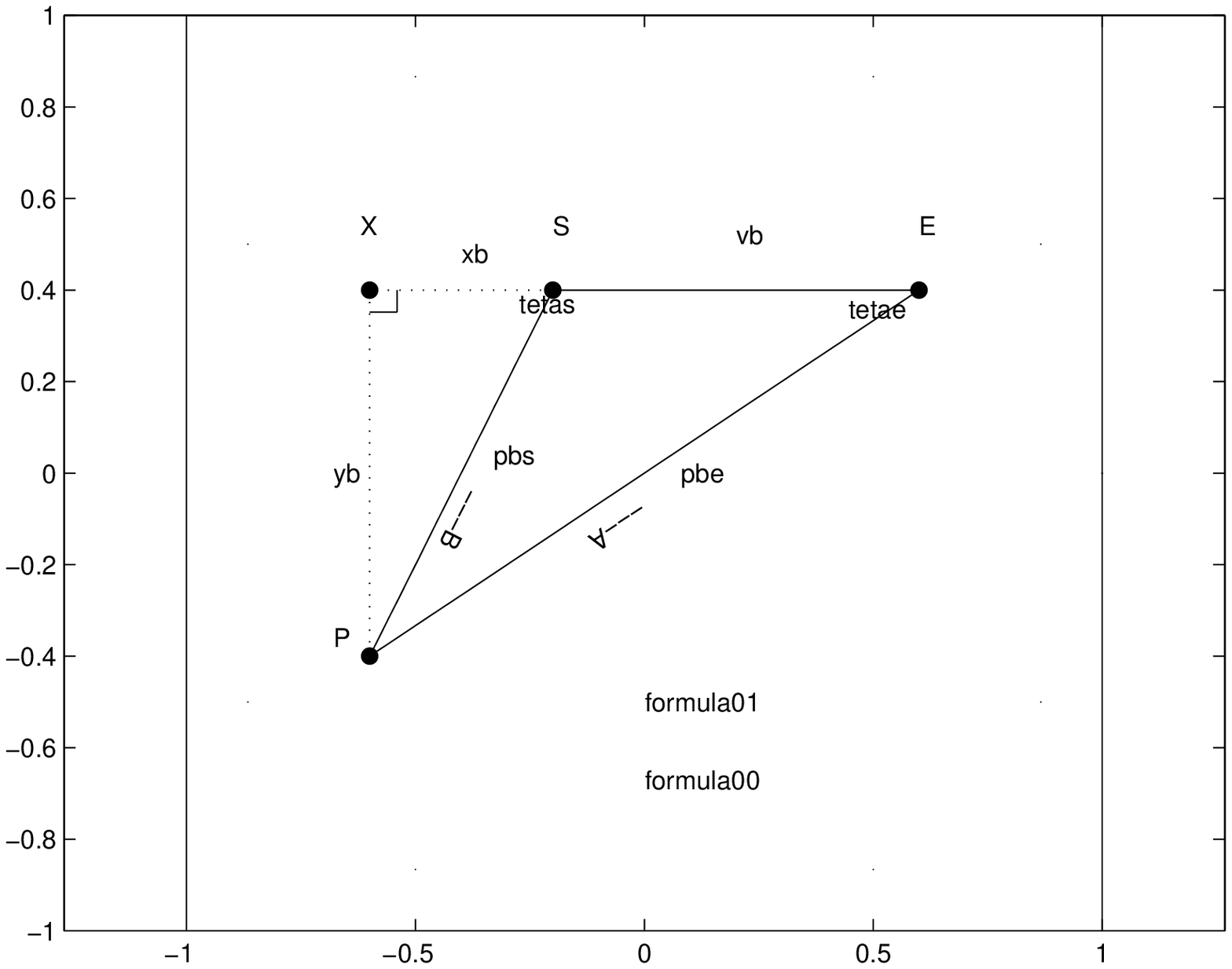}
\caption[Particle Aberration: Classical]{
Classical Particle Aberration is commonly studied by velocity vectors and angles relative to
Cartesian coordinates $(x,y)$, $-\infty<x,y<\infty$,
in the Newtonian velocity space $\Rtwo$.
The Euclidity of $(\Rtwo,+)$ is determined by the
Euclidean metric in which the distance between two points $A,B$
is $\|-A+B\|$.
\label{fig206sa4m}}}
  {
\psfrag{E}[]{E}
\psfrag{S}[]{S}
\psfrag{X}[]{Q}
\psfrag{P}[]{P}
\psfrag{O}[]{O}
\psfrag{pbs}[]{$\hspace{0.2cm}\pb_s$}
\psfrag{pbe}[]{$\hspace{0.32cm}\pb_e$}
\psfrag{vb}[]{$\leftarrow\vb$}
\psfrag{yb}[]{$\bb$}
\psfrag{xb}[]{$\ab$}
\psfrag{tetas}[]{$\hspace{-0.18cm}\theta_{\!s}$}
\psfrag{tetae}[]{$\hspace{-0.5cm}\theta_e$}
\psfrag{---A}{$\rightarrow$}
\psfrag{---B}{$\rightarrow$}
\psfrag{formula00}[]{$\hspace{0.14cm}\vb=\om E \op S \in \Rctwou$}
\psfrag{formula01}[]{\hspace{0.8cm} The velocity gyrovector $\vb$,}
 \includegraphics[width=0.5\textwidth]{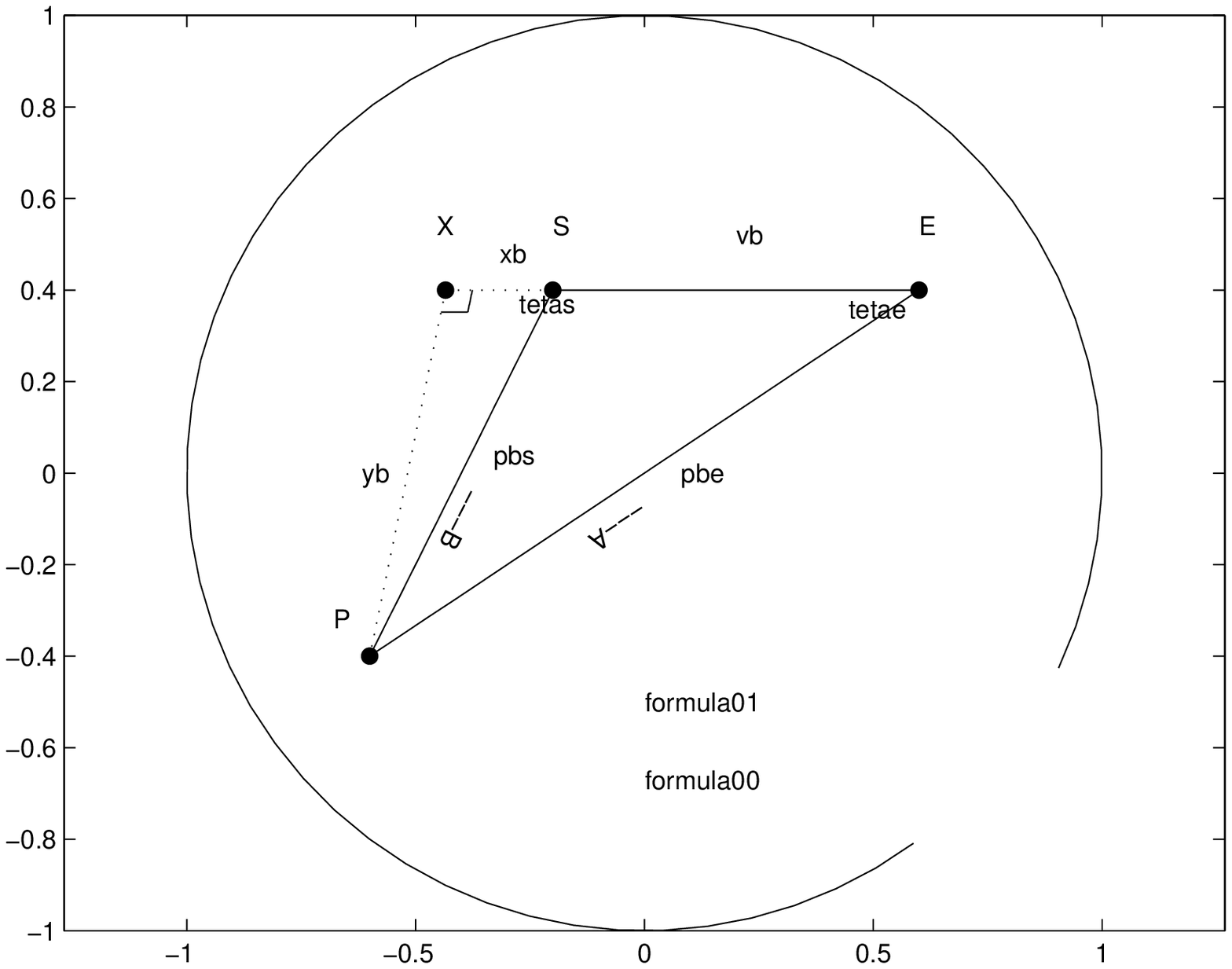}
\caption[Particle Aberration: Relativistic]{
Relativistic Particle Aberration can be studied by velocity gyrovectors and gyroangles relative to
Cartesian coordinates $(x,y)$, $x^2+y^2<c^2=1$,
in the Einsteinian velocity space $\Rctwou$.
The hyperbolicity of $(\Rctwo,\op)$ is determined by the
hyperbolic gyrometric in which the distance between two points $A,B$
is $\|\om A\op B\|$.
\label{fig206sb4m}} }
\end{figure}


Figs.~\ref{fig206sa4m} and \ref{fig206sb4m}
present, respectively, the
Newtonian velocity space $\Rt$, and the
Einsteinian velocity space $\Rct$, along with several of their points, where only
two dimensions are shown for clarity.
The origin, $O$, of each of these two velocity spaces, not shown in
Figs.~\ref{fig206sa4m}\,--\,\ref{fig206sb4m},
is arbitrarily selected, representing an arbitrarily selected inertial rest frame
$\Sigma_\zerb$.

Points of a velocity space represent uniform velocities relative to the
rest frame $\Sigma_\zerb$.
In particular,
the points $E$, $S$ and $P$ in Figs.~\ref{fig206sa4m}\,--\,\ref{fig206sb4m}
represent, respectively, the velocity of the Earth, the Sun, and a Particle
(emitted, for instance, from a star) relative to the rest frame $\Sigma_\zerb$.

Accordingly, the Newtonian velocity vector $\vb$ of the Sun relative to the Earth is
\begin{subequations} \label{eq1}
\begin{eqnarray} \label{eq1a}
\vb &= -~E~+~S
\end{eqnarray}
and the Einsteinian velocity gyrovector $\vb$ of the Sun relative to the Earth is
\begin{eqnarray} \label{eq1b}
\vb &= \om~E~\op~S
\end{eqnarray}
\end{subequations}
as shown, respectively, in Figs.~\ref{fig206sa4m}\,--\,\ref{fig206sb4m}.

Similarly, the particle $P$ moves uniformly relative to the Earth and
relative to the Sun with respective Newtonian velocities, Fig.~\ref{fig206sa4m},
\begin{subequations} \label{eq2}
\begin{equation} \label{eq2a}
\begin{split}
\Pb_e &= -~E~+~P \\
\Pb_s &= -~S~+~P
\end{split}
\end{equation}

In full analogy with \eqref{eq2a},
the Einsteinian  velocities of the particle $P$ relative to the Earth and
relative to the Sun are, Fig.~\ref{fig206sb4m},
\begin{equation} \label{eq2b}
\begin{split}
\Pb_e &= \om~E~\op~P \\
\Pb_s &= \om~S~\op~P
\end{split}
\end{equation}
\end{subequations}

The Newtonian velocities \eqref{eq2a} of the particle
make angles $\theta_e$ and $\theta_s$,
respectively, with the Newtonian velocity $\vb$ in \eqref{eq1a}, as shown in
Fig.~\ref{fig206sa4m}.

Similarly,
the Einsteinian velocities \eqref{eq2b} of the particle
make gyroangles $\theta_e$ and $\theta_s$,
respectively, with the Einsteinian velocity $\vb$ in \eqref{eq1b}, as shown in
Fig.~\ref{fig206sb4m}.

Following Fig.~\ref{fig206sa4m},
{\it classical particle aberration} is the angular change $\theta_s - \theta_e$
in the apparent direction of a moving particle caused by the
motion with Newtonian relative velocity $\vb$, \eqref{eq1a}, between $E$ and $S$.
A relationship between the angles $\theta_s$ and $\theta_e$ is called a
{\it classical particle aberration formula}.

Similarly, following Fig.~\ref{fig206sb4m},
{\it relativistic particle aberration} is the gyroangular change $\theta_s - \theta_e$
in the apparent direction of a moving particle caused by the
motion with Einsteinian relative velocity $\vb$, \eqref{eq1b}, between $E$ and $S$.
A relationship between the gyroangles $\theta_s$ and $\theta_e$ is called a
{\it relativistic particle aberration formula}.

In order to uncover classical particle aberration formulas
we draw the altitude $PQ$ from vertex $P$ to side $ES$
(extended if necessary) obtaining the right-angled triangle
$EQP$ in Fig.~\ref{fig206sa4m}. The latter, in turn, enables
the parallelogram law and the triangle law of Newtonian velocity addition, and
the triangle equality and trigonometry, to be applied, obtaining the
following two, mutually equivalent, classical particle aberration formulas,
\begin{subequations} \label{eqhgsd}
\begin{equation} \label{jdkfb06}
\begin{split}
\cot\theta_e &= \cot\theta_s + \frac{v}{p_s \sin\theta_s} \\[12pt]
\cot\theta_s &= \cot\theta_e - \frac{v}{p_e \sin\theta_e}
\end{split}
\end{equation}
The resulting classical particle aberration formulas \eqref{jdkfb06}
are in full agreement with formulas available in the literature;
see, for instance, \cite[Eq.~(134), p.~147]{synge65}.

The details of obtaining the classical particle aberration formulas \eqref{jdkfb06},
illustrated in  Fig.~\ref{fig206sa4m}, are presented in
\cite[Chap.~13]{mybook03} and, hence, will not be presented here.

In full analogy,
in order to uncover relativistic particle aberration formulas
we draw the altitude $PQ$ from vertex $P$ to side $ES$
(extended if necessary) obtaining the right gyrotriangle
$EQP$ in Fig.~\ref{fig206sb4m}. The latter, in turn, enables
the gyrotriangle law and the gyroparallelogram law of Einsteinian velocity addition,
and the gyrotriangle equality and gyrotrigonometry, to be applied, obtaining the
following two, mutually equivalent, relativistic particle aberration formulas,
\begin{equation} \label{rjskw1}
\begin{split}
\cot\theta_e &=
\gamma_v^{\phantom{1}} (\cot\theta_s + \frac{v}{p_s\sin\theta_s} )
\\
\cot\theta_s &=
\gamma_v^{\phantom{1}} (\cot\theta_e - \frac{v}{p_e\sin\theta_e} )
\end{split}
\end{equation}
\end{subequations}
The resulting relativistic particle aberration formulas \eqref{rjskw1}
are in full agreement with formulas available in the literature;
see, for instance,
\cite[p.~53]{rindler82},
\cite[p.~86]{rindler06}
and \cite[pp.~12--14]{landaulifshitz75}.

The details of obtaining the relativistic particle aberration formulas \eqref{rjskw1},
illustrated in  Fig.~\ref{fig206sb4m}, are presented in
\cite[Chap.~13]{mybook03} and, hence, will not be presented here.

In Euclidean geometry the triangle law and the parallelogram law of
vector addition are equivalent. In full analogy, their gyro-counterparts
are equivalent as well, as explained in \cite[Sec.~4.3]{mybook04} in detail.

\begin{subequations}
The equivalence of the two equations in \eqref{jdkfb06} implies
$p_s \sin\theta_s = p_e \sin\theta_e$, thus recovering
the law of sines\index{law of sines}
\begin{equation} \label{jdkge}
\frac{p_s}{\sin\theta_e} = \frac{p_e}{\sin\theta_s}
\end{equation}
for the Euclidean triangle $ESP$ in Fig.~\ref{fig206sa4m},
noting that $\sin\theta_s = \sin(\pi-\theta_s)$.

In full analogy, the equivalence of the two equations in \eqref{rjskw1} implies
the relativistic law of gyrosines \cite[Theorem 12.5]{mybook03},
\begin{equation} \label{kdnrw}
\frac{\gamma_{p_s}^{\phantom{1}} p_s}{\sin\theta_e}
=
\frac{\gamma_{p_e}^{\phantom{1}} p_e}{\sin\theta_s}
\end{equation}
for the gyrotriangle $ESP$ in Fig.~\ref{fig206sb4m}, noting that
$\sin\theta_s=\sin(\pi-\theta_s)$.
\end{subequations}

In this section we have described the way to recover the well-known
classical particle aberration formulas \eqref{jdkfb06}
by employing trigonometry, the triangle equality, the triangle addition law, and
the parallelogram addition law of Newtonian velocities.

In full analogy, we have described in this section the way to recover the well-known
relativistic particle aberration formulas \eqref{rjskw1}
by employing gyrotrigonometry, the gyrotriangle equality,
the gyrotriangle addition law, and
the gyroparallelogram addition law of Einsteinian velocities.

In contrast, the well-known
relativistic particle aberration formulas \eqref{rjskw1} are obtained in
the literature by employing the Lorentz transformation group of
special relativity.

What is remarkable here is that the
relativistic particle aberration formulas \eqref{rjskw1},
which are commonly obtained in the literature by Lorentz transformation considerations,
are recovered here by gyrotrigonometry and the
gyroparallelogram addition law of Einsteinian velocities,
in full analogy with the recovery of their classical counterparts.
This remarkable way of recovering the particle aberration formulas \eqref{eqhgsd}
demonstrates that
since special relativity is governed by the Lorentz transformation group,
Einsteinian velocities in special relativity
add according to the gyroparallelogram addition law,
just as Newtonian velocities add according to the parallelogram addition law.

Hence, any experiment that confirms the validity of the
relativistic particle aberration formulas \eqref{rjskw1}, amounts to an experiment
that confirms the validity of the gyroparallelogram addition law
of Einsteinian velocities.

In the special case when the particle $P$ in  Fig.~\ref{fig206sb4m}
is a photon emitted from a star, the Einsteinian speeds of the photon relative to both
$E$ and $S$ is $p_e = p_s = c$, and the
relativistic particle aberration formulas \eqref{rjskw1}
reduce to the corresponding {\it stellar aberration formulas},
\begin{equation} \label{rjsfb1}
\begin{split}
\cot\theta_e &=
\gamma_v^{\phantom{1}} \frac{\cos\theta_s + v/c  }{\sin\theta_s}
\\
\cot\theta_s &=
\gamma_v^{\phantom{1}} \frac{\cos\theta_e - v/c  }{\sin\theta_e}
\end{split}
\end{equation}
The discovery of stellar aberration, which results from the velocity of the Earth in
its annual orbit about the Sun, by the English astronomer James Bradley in the 1720s,
is described, for instance, in \cite{stewart64}.

A high precision test of the validity of the stellar aberration formulas \eqref{rjsfb1} in
special relativity has recently been obtained as a byproduct of the
``GP-B'' gyroscope experiment. Indeed,
the validity of the stellar aberration formulas \eqref{rjsfb1}
is central for the success of the
``GP-B'' gyroscope experiment developed by NASA and Stanford University
\cite{everitt69}
to test two unverified predictions of
Einstein's general theory of relativity \cite{everitt88,gp-b-URL1}.

The GP-B space gyroscopes encountered two kinds of stellar aberration.
Orbital aberration with 97.5-minute period of $\pm5.1856$ arc-seconds
that results from the motion of the gyroscopes
around the earth, and annual aberration with one year period
of about $\pm20.4958$ arc-seconds
that results from the motion
of the earth (and the gyroscopes) around the sun.
These aberrations, calculated by methods of special relativity,
were used to calibrate the gyroscopes and their
accompanying instruments.

If the ``GP-B'' gyroscope experiment proves successful, it could be considered
as an experimental evidence of the validity of the
stellar aberration formulas \eqref{rjsfb1}
and, hence, the validity of the
relativistic particle aberration formulas \eqref{rjskw1} as well.
The latter, in turn, could be considered
as an experimental evidence of the validity of the
gyroparallelogram addition law of Einsteinian velocities.
Indeed, the preliminary analysis of data has confirmed the theoretical
prediction of the ``GP-B'' gyroscope experiment \cite{gp-b-URL2},
so that the experiment seems to prove successful.

\section{The Relativistic Mass and Dark Matter}\label{secc8}

Let
\begin{equation} \label{kyhd06}
S = S(m_k,\vb_k,\Sigma_\zerb,N)
\end{equation}
be an isolated system of
$N$ noninteracting material particles the $k$-th particle of which has
invariant mass $m_k >0$ and relativistically admissible velocity $\vb_k \inn\vc $
relative to an inertial frame $\Sigma_\zerb$,
$k=1,\dots,N$.

Assuming that the four-momentum is additive,
the sum of the four-momenta of the $N$ particles of the
system $S$ gives the four-momentum
$(m_0\gamma_{\vb_0}^{\phantom{O}}, m_0\gamma_{\vb_0}^{\phantom{O}} \vb_0)^t$
of $S$. Accordingly,
\begin{equation} \label{kyhd08}
\sum_{k=1}^{N} m_k
\begin{pmatrix}  \gamma_{\vb_k}^{\phantom{O}} \\[6pt]
\gamma_{\vb_k}^{\phantom{O}} \vb_k
\end{pmatrix}
=
m_0
\begin{pmatrix}  \gamma_{\vb_0}^{\phantom{O}} \\[6pt]
\gamma_{\vb_0}^{\phantom{O}} \vb_0
\end{pmatrix}
\end{equation}
where the invariant masses $m_k>0$ and the velocities $\vb_k$, $k=1,...,N$, relative
to $\Sigma_\zerb$ of the constituent particles of $S$ are given, while the unique
invariant mass $m_0$ of $S$ and the unique velocity $\vb_0$ of the CM frame of $S$
relative to $\Sigma_\zerb$ are to be determined.

In general, if $m_0$ in \eqref{kyhd08} exists, it satisfies the inequality
\begin{equation} \label{krhd07}
m_0 \ne \sum_{k=1}^{N} m_k
\end{equation}
leading to the conclusion that in special relativity
mass, $m_k$, is not additive \cite{marx91},
and relativistic mass, $m_k  \gamma_{\vb_k}^{\phantom{O}}$,
does not mesh up with the Minkowskian four-vector formalism of special relativity
\cite{okun89,brehme68,adler87}.

It follows immediately from \eqref{kyhd08} that if $m_0\ne0$ exists, then $\vb_0$
is given by the equation
\begin{equation} \label{hurfm}
\vb_0 = \frac{
\sum_{k=1}^{N} m_k \gamma_{\vb_k}^{\phantom{O}} \vb_k
}{
\sum_{k=1}^{N} m_k \gamma_{\vb_k}^{\phantom{O}}
}
\end{equation}

Employing the gyrocommutative gyrogroup structure of Einstein's velocity addition law,
it is found in \cite[Chap.~11]{mybook03}
that the unique solution of \eqref{kyhd08} for the unknown $m_0>0$
is given by the equation \cite{pioneer08,invariant09}
\begin{equation} \label{gksndh}
m_0 \phantom{i} = \phantom{i} \sqrt{
\left( \sum_{k=1}^{N} m_k \right)^2 +
2\sum_{\substack{j,k=1\\j<k}}^N m_j  m_k
(\gamma_{\om\vb_j\op\vb_k}^{\phantom{O}} -1)
}
\end{equation}

Hence, if the four-momentum is additive then, by \eqref{kyhd08}, the velocity $\vb_0$ of
the CM frame of $S$ relative to $\Sigma_\zerb$ is given by \eqref{hurfm},
and the invariant mass $m_0$ of $S$ is given by \eqref{gksndh}.

Furthermore, it follows from \eqref{kyhd08} that the relativistic mass
$m_0\gamma_{\vb_0}^{\phantom{O}}$ is additive, that is,
\begin{equation} \label{tkdnr}
m_0\gamma_{\vb_0}^{\phantom{O}} =
\sum_{k=1}^{N} m_k \gamma_{\vb_k}^{\phantom{O}}
\end{equation}

We thus see that owing to the introduction of the invariant mass $m_0$ of a
system of particles, given by \eqref{gksndh}, the relativistic mass is additive,
and it meshes extraordinarily well with the
Minkowskian four-vector formalism of special relativity.

Suggestively, we define the Newtonian mass, $m_{newton}$, of the system $S$
by the equation
\begin{equation} \label{fkns0i}
m_{newton} := \sum_{k=1}^{N} m_k
\end{equation}
and the dark mass, $m_{dark}$, of the system $S$
by the equation
\begin{equation} \label{fknsii}
m_{dark} := \sqrt{
2\sum_{\substack{j,k=1\\j<k}}^N m_j  m_k
(\gamma_{\om\vb_j\op\vb_k}^{\phantom{O}} -1)
}
\end{equation}
so that \eqref{gksndh} can be written as
\begin{equation} \label{tjansv}
m_0 = \sqrt{m_{newton}^2 + m_{dark}^2}
\end{equation}

The dark mass in \eqref{fknsii}
measures the extent to which the system $S$ deviates away from rigidity.
Gravitationally, dark mass behaves just like ordinary mass, as postulated
in cosmology \cite[p.~37]{conselice07}.
However, it is undetectable by all means other than gravity since it is
fictitious, or virtual,
in the sense that it is generated solely by relative motion
between constituent objects of the system.

Dark matter was introduced into cosmology as an {\it ad hoc} postulate,
hypothesized to provide observed missing gravitational force \cite{copeland06}.
In contrast, dark mass emerges here as a consequence of the
covariance of Einstein's special theory of relativity, and it stems from
relative motion between constituent objects of a system.
All relative velocities between the constituent particles of a
{\it rigid} system vanish, so that the dark mass of a rigid system vanishes
as well.

Under special circumstances dark matter may appear or disappear in galaxies,
a fact that may increase or decrease the total mass of galaxies which may,
in turn, decelerate or accelerate the expansion of the Universe.
These special circumstances are characterized by supernovae and star formation.

Each stellar explosion, a supernova, creates relative speeds between objects that
were at rest relative to each other prior to the explosion.
The resulting generated relative speeds increase the dark mass of
the region of the supernova.

Conversely, relative speeds
of objects that converge into a star vanish in the process of star formation,
resulting in the
decrease of the dark mass of a star formation region.

Dark mass is observed in particle physics as well.
Let us consider two particles with rest
masses $m_1$ and $m_2$, and velocities $\vb_1$ and $\vb_2$
relative to an inertial rest frame $\Sigma_\zerb$, respectively.
If these particles were to collide and stick, the rest mass $m_0$ and the
velocity $\vb_0$ relative to $\Sigma_\zerb$ of the resulting composite particle would
satisfy the four-momentum conservation law \eqref{kyhd08}, that is
\begin{equation} \label{fugnf1}
m_0
\begin{pmatrix}  \gamma_{\vb_0}^{\phantom{O}} \\[6pt]
\gamma_{\vb_0}^{\phantom{O}} \vb_0 \end{pmatrix}
=
m_1
\begin{pmatrix}  \gamma_{\vb_1}^{\phantom{O}} \\[6pt]
\gamma_{\vb_1}^{\phantom{O}} \vb_1 \end{pmatrix}
+
m_2
\begin{pmatrix}  \gamma_{\vb_2}^{\phantom{O}} \\[6pt]
\gamma_{\vb_2}^{\phantom{O}} \vb_2 \end{pmatrix}
\end{equation}

Hence, by \eqref{gksndh} and \eqref{tjansv},
\begin{equation} \label{fugnf2}
\begin{split}
m_0 &= \sqrt{
(m_1+m_2)^2 + 2m_1 m_2 (\gamma_{\om\vb_1\op\vb_2}^{\phantom{O}} - 1)
}
\\[6pt] &=  \sqrt{
m_{newton}^2 + m_{dark}^2
}
\end{split}
\end{equation}
where
\begin{equation} \label{fugnf3}
\begin{split}
m_{newton} &= m_1 + m_2 \\
m_{dark} &= 2m_1 m_2 (\gamma_{\om\vb_1\op\vb_2}^{\phantom{O}} - 1) > 0
\end{split}
\end{equation}
and, by \eqref{hurfm},
\begin{equation} \label{fugnf4}
\vb_0 = \frac{
m_1 \gamma_{\vb_1}^{\phantom{O}} \vb_1 +
m_2 \gamma_{\vb_2}^{\phantom{O}} \vb_2
}{
m_1 \gamma_{\vb_1}^{\phantom{O}} +
m_2 \gamma_{\vb_2}^{\phantom{O}}
}
\end{equation}
Hence, the relativistic mass of the composite particle is
$m_0\gamma_{\vb_0}^{\phantom{O}}$, where $m_0$ is given by \eqref{fugnf2},
and $\vb_0$ is given by \eqref{fugnf4}, satisfying by \eqref{tkdnr},
\begin{equation} \label{jughd}
m_0\gamma_{\vb_0}^{\phantom{O}} =
m_1\gamma_{\vb_1}^{\phantom{O}} +
m_2\gamma_{\vb_2}^{\phantom{O}}
\end{equation}

It is clear from \eqref{fugnf2}\,--\eqref{fugnf3}
that the Newtonian mass, $m_{newton}$, is conserved during the collision.
It is only the total invariant mass, $m_0$, which is increased following the collision owing to
the emergence of the dark mass $m_{dark}$.

Examples of particles that collide and stick, as described in
\eqref{fugnf1}\,--\,\eqref{jughd}, are observed in experimental searches for
new particles in high-energy particle colliders.
We thus see that our study of the dark mass may be applied on the subatomic scale
as well as on the scale of the cosmos.

\end{document}